\newcommand\reallywidehat[1]{%
\savestack{\tmpbox}{\stretchto{%
  \scaleto{%
      \scalerel*[\widthof{\ensuremath{#1}}]{\kern-.6pt\bigwedge\kern-.6pt}%
          {\rule[-\textheight/2]{1ex}{\textheight}}
            }{\textheight}%
            }{0.5ex}}%
            \stackon[1pt]{#1}{\tmpbox}%
            }
\newcommand{\wt}[1]{\widetilde{#1}}
\newcommand{\bef}{\preccurlyeq}
\DeclarePairedDelimiter\bra{\langle}{|}
\DeclarePairedDelimiter\ket{|}{\rangle}
\DeclarePairedDelimiterX\braket[2]{\langle}{\rangle}{#1\delimsize\vert#2}
\DeclarePairedDelimiter\ceil{\lceil}{\rceil}
\DeclarePairedDelimiter\norm{\lVert}{\rVert}
\DeclarePairedDelimiter\abs{|}{|}
\DeclarePairedDelimiter\brac{\lbrack}{\rbrack}
\DeclarePairedDelimiter\set{\lbrace}{\rbrace}
\DeclarePairedDelimiter\paren{\lparen}{\rparen}
\newcommand{\E}[2][]{\operatorname*{\mathbb{E}}_{#1 }\brac*{#2}}
\newcommand{\Var}[2][]{\operatorname*{\normalfont{\text{Var}}}_{#1 }\paren*{#2}}
\newcommand{\bO}[1]{\operatorname*{O}\paren*{#1}}
\newcommand{\bOt}[1]{\operatorname*{\wt{O}}\paren*{#1}}
\newcommand{\lO}[1]{\operatorname*{o}\paren*{#1}}
\newcommand{\bOm}[1]{\operatorname*{\Omega}\paren*{#1}}
\newcommand{\lOm}[1]{\operatorname*{\omega}\paren*{#1}}
\newcommand{\bT}[1]{\operatorname*{\Theta}\paren*{#1}}
\newcommand{\bb}{\mathbf{b}}
\newcommand{\Xb}{\mathbf{X}}
\newcommand{\Oc}{\mathcal{O}}
\newcommand{\degb}[1]{d_{#1}^{\rightarrow}}
\newcommand{\dedge}[1]{\overrightarrow{#1}}
\newcommand{\tgk}{t^{>k}}
\newcommand{\tlk}{t^{<k}}
\newcommand{\Tgk}{T^{>k}}
\newcommand{\Tlk}{T^{<k}}
\newcommand{\Ibb}{\mathbb{I}}
\newcommand{\return}{\textbf{return}~}
\newcommand{\Cbb}{\mathbb{R}}
\newcommand{\Rbb}{\mathbb{R}}
\newcommand{\Zbb}{\mathbb{Z}}
\newcommand{\bool}{{\{0, 1\}}}
\DeclareMathOperator{\questimator}{\normalfont{\textproc{QuantumEstimator}}}
\DeclareMathOperator{\clastimator}{\normalfont{\textproc{ClassicalEstimator}}}
\newcommand{\PM}[1]{#1\normalfont{\text{PM}}}
\DeclareMathOperator{\aPM}{\PM{\alpha}}
\newtheorem{theorem}{Theorem}
\newtheorem*{theorem*}{Theorem}
\newtheorem{lemma}[theorem]{Lemma}
\newtheorem{definition}[theorem]{Definition}
\newtheorem*{definition*}{Definition}
\newtheorem*{lemma*}{Lemma}
\newtheorem*{corollary*}{Corollary}
\newtheorem*{claim*}{Claim}
	\gdef\xxxmark{%
		\expandafter\ifx\csname @mpargs\endcsname\relax 
		\expandafter\ifx\csname @captype\endcsname\relax 
		\marginpar{xxx}
		\else
		xxx 
		\fi
		\else
		xxx 
		\fi}
	\gdef\xxx{\@ifnextchar[\xxx@lab\xxx@nolab}
	\long\gdef\xxx@lab[#1]#2{{\bf [\xxxmark #2 ---{\sc #1}]}}
	\long\gdef\xxx@nolab#1{{\bf [\xxxmark #1]}}
\title{A Quantum Advantage for a Natural Streaming Problem}
\date{}
\author{John Kallaugher\\The University of Texas at Austin\\
\texttt{jmgk@cs.utexas.edu}}
\begin{document}
\maketitle

\begin{abstract}
\noindent
Data streaming, in which a large dataset is received as a ``stream'' of
updates, is an important model in the study of space-bounded computation.
Starting with the work of Le Gall \lbrack SPAA~`06\rbrack, it has been known
that \emph{quantum} streaming algorithms can use asymptotically less space than
their classical counterparts for certain problems. However, so far, all known
examples of quantum advantages in streaming are for problems that are either
specially constructed for that purpose, or require many streaming passes over
the input.

We give a one-pass quantum streaming algorithm for one of the best-studied
problems in classical graph streaming---the \emph{triangle counting} problem.
Almost-tight parametrized upper and lower bounds are known for this problem in
the classical setting; our algorithm uses polynomially less space in certain
regions of the parameter space, resolving a question posed by Jain and Nayak in
2014 on achieving quantum advantages for natural streaming problems.
\end{abstract}

\section{Introduction}
\subsection{Streaming Algorithms}
Streaming algorithms are a class of algorithms for processing \emph{very large}
datasets that arrive ``one piece at a time''---some dataset too large to fit
into memory is built up by a series of updates. More formally, a vector $x \in
\Zbb^N$ is received as a series of updates $(\sigma_t)_{t = 1 \dots}$, where
each update $\sigma_t = ze_i$ consists of adding a scalar $z \in \Zbb$ to a
co-ordinate $i \in \brac{N}$, and the goal of a streaming algorithm is to
estimate some statistic of $x = \sum_{t}\sigma_t$ in $\lO{n}$
space\footnote{Space has been the primary object of study in the theory of
streaming algorithms. Update and pre- and post-processing time are typically,
although not necessarily, manageable if the space required by the algorithm is
small.}.

Streaming algorithms have been studied for a wide variety of problems, such as
cardinality estimation~\cite{FM85}, approximating the moments of a
vector~\cite{AMS96}, and subgraph counting~\cite{BKS02}. In this paper we will
be concerned with \emph{quantum} streaming algorithms.

\paragraph{Quantum Streaming} The prospect of space-constrained quantum
computers has motivated the study of \emph{quantum} streaming, in which a
stream of updates is received by an algorithm that is able to maintain a
quantum state and perform quantum operations (including measurements) on this
state as it processes the stream\footnote{Note that, despite the deferred
measurement principle, we may want to perform measurements between updates, as
the measurements we make may depend on the updates we see, and therefore we
cannot automatically push them to the end of the stream.}.

The study of quantum streaming algorithms started with~\cite{LG06}, in which a
problem was constructed that exhibits an exponential separation between quantum
and classical space complexity. The problem in question is not quite a
streaming problem in the sense we defined above, as the function tested depends
on the order of the updates of the stream, but in~\cite{GKKRW07} it was shown
that such a separation exists for an update-order-independent function.

This suggests the question, raised in~\cite{JN14}, of whether it is possible to
obtain such separations for ``natural'' problems. They proposed as a candidate
the problem of recognizing the $\texttt{Dyck}(2)$ language in the
stream\footnote{They use the broader definition of streaming that encompasses
update-order-dependent functions, but any separation for a function that does
\emph{not} depend on the order would also be one in that model.}, but while
better \emph{lower} bounds for this problem have since been shown~\cite{NT17},
better-than-classical upper bounds are still unknown. 

When many passes are allowed over the stream,~\cite{M16,YM19} demonstrate a
quantum advantage for the problem of estimating the frequency moments of a
vector, giving algorithms for various settings of the problem that can save a
$k^2$ factor in space complexity when they make $k$ passes, instead of the $k$
factor possible in classical streaming~\cite{AMS96,CR11}. But typically in
streaming the objective is to make only one pass over the stream, or at most
$\bO{1}$ passes. 

We resolve the question of~\cite{JN14}, giving a new one-pass streaming
algorithm for the \emph{triangle counting} problem, one of the best-studied
problems in graph streaming. 

\paragraph{Triangle Counting} In the (insertion-only) graph streaming model a
graph $G = (V, E)$ is received as a sequence of edges\footnote{In our vector
model described earlier, this corresponds to receiving the adjacency matrix of
the graph as a series of \emph{positive} updates to individual co-ordinates.
Other models of graph streaming, in which edges can be deleted as well as
added, also exist.} $(\sigma_t)_{t=1}^m$ from $E$.  The first problem to be
studied in this setting was that of estimating the number of triangles
(three-cliques) in $G$~\cite{BKS02}.

All algorithms for this problem are parametrized, as counting triangles
requires $\bOm{n^2}$ space if the number of them is sufficiently
small~\cite{BKS02}, and even graphs with $\bOm{m}$ triangles can be hard to
distinguish from triangle-free graphs in sufficiently ``hard''
graphs~\cite{BOV13}. The space complexity of such algorithms is therefore
typically quoted in terms of these parameters---it will often be unreasonable
to assume that the algorithm knows these parameters exactly in advance, but
constant factor bounds on them will suffice.

The best-known classical algorithm in this setting is from~\cite{JK21}, which
gives an \[
\bOt{\frac{m}{T}\cdot\paren*{\Delta_E + \sqrt{\Delta_V}}
\cdot\frac{1}{\varepsilon^2}\log \frac{1}{\delta}}
\]
space upper bound for obtaining a $(1 \pm \varepsilon)$-multiplicative
approximation with probability\footnote{For the remainder of this discussion we
will assume $\varepsilon, \delta$ are constant. Most algorithms for this
problem have a $\frac{1}{\varepsilon^2}\log \frac{1}{\delta}$ dependence, which
comes from taking the average of $\bT{\frac{1}{\varepsilon^2}}$
constant-variance estimators to obtain a $(1 \pm \varepsilon)$-multiplicative
approximation with 2/3 probability, then repeating $\bT{\log \frac{1}{\delta}}$
times and taking the median in order to amplify the success probability to $1 -
\delta$.} $1 - \delta$  in a graph with $m$ edges, $T$ triangles, and in which
no more than $\Delta_E$ triangles share an edge and no more than $\Delta_V$
share a vertex.  This algorithm is known to be classically optimal for this
parametrization, up to log factors, as~\cite{BOV13} gives a
$\bOm{\frac{m\Delta_E}{T}}$ lower bound and~\cite{KP17} gives
$\bOm{\frac{m\sqrt{\Delta_V}}{T}}$ when $T = \bO{m}$.

These two lower bounds are both based on reductions from communication
complexity. The first is from the Indexing problem~\cite{KNR95}, which is as
hard for quantum communication as it is for classical communication
(see~\cite{ANTV02}, in which it is called the problem of quantum random
access codes) and so the bound directly applies to any quantum streaming
algorithm. However, the second is from the Boolean Hidden Matching problem, in
particular the variant studied in~\cite{GKKRW07}. This problem \emph{is} easier
in quantum communication, and indeed was already used to prove a
quantum-classical streaming separation.

We give a quantum triangle counting algorithm that beats the classical lower
bound.
\begin{restatable}{theorem}{main}
\label{thm:main}
For any $\varepsilon, \delta \in (0,1\rbrack$, there is a quantum streaming
algorithm that uses \[
O\paren*{ \frac{m^{8/5}}{T^{6/5}}\Delta_E^{4/5}\log n
\cdot\frac{1}{\varepsilon^2}\log \frac{1}{\delta}}
\] quantum and classical bits in expectation to return a $(1 \pm
\varepsilon)$-multiplicative approximation to the triangle count in an
insertion-only graph stream with probability $1 - \delta$.

$m$ is the number of edges in the stream, $T$ the number of triangles, and
$\Delta_E$ the greatest number of triangles sharing any given edge.
\end{restatable}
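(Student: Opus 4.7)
The plan is to adapt the classical sampling framework behind the $\bOt{\frac{m}{T}(\Delta_E + \sqrt{\Delta_V})}$ bound of \cite{JK21}, isolating the sub-routine whose classical cost is governed by the $\sqrt{\Delta_V}$ term and replacing it with a quantum procedure. The motivation for focusing there is that the matching $\bOm{m\sqrt{\Delta_V}/T}$ classical lower bound of \cite{KP17} reduces from the Boolean Hidden Matching problem, which is exactly the communication problem shown in \cite{GKKRW07} to admit an exponential quantum advantage, while the $\Delta_E$ term comes from Indexing, for which no quantum advantage is possible and which should therefore reappear essentially unchanged in the final quantum bound.

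Concretely, I would sample each stream edge independently with some probability $p$, calling sampled edges \emph{pivots}, and for each pivot $(u,v)$ maintain a quantum register whose amplitudes are updated by every subsequent edge touching $u$ or $v$. The register will hold a superposition over candidate ``third vertices'' $w$, with each arrival of an edge $(u,w)$ or $(v,w)$ triggering a unitary that preferentially amplifies amplitude on those $w$ that complete a triangle on the pivot~--- a streaming analogue of Grover amplification in which the stream itself supplies the oracle queries. An amplitude-estimation-style measurement at the end of the stream then yields an estimator of the triangle count on that pivot; averaging $\bOt{1/\varepsilon^2}$ parallel copies gives constant-factor accuracy with constant probability, and the median of $\bOt{\log 1/\delta}$ such averages amplifies to success probability $1 - \delta$. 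With a per-pivot register of dimension $s$, the quadratic advantage of quantum amplification lets us shrink the classical per-pivot cost of $\Theta(\sqrt{\Delta_V})$ to something polynomially smaller; optimising this trade-off against $p$ subject to an accuracy constraint controlled by $\Delta_E$ should produce the fifth-power exponents $m^{8/5}T^{-6/5}\Delta_E^{4/5}$ stated in Theorem~\ref{thm:main}.

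The main obstacle is that quantum amplitude amplification is ordinarily analysed with unrestricted oracle access, whereas each stream update arrives exactly once; the amplification must therefore be performed \emph{in place}, interleaving reflection operators with stream updates, and one has to be careful that the measurements which may be needed between updates (to decide later actions) do not collapse coherences before they contribute to the estimator. The core of the technical work will be to show that this single-pass amplification still yields a low-variance estimator whose second moment is controlled by $\Delta_E$ alone, so that the $\Delta_V$ dependence drops out of the final bound entirely and only $\bOt{1/\varepsilon^2}$ parallel copies are needed for $(1 \pm \varepsilon)$-approximation. A secondary challenge will be bounding the quantum and classical space \emph{in expectation} rather than worst-case, since the number of pivots and the amount of quantum work per pivot are both random and depend on the realised stream.
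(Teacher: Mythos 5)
Your proposal takes a genuinely different route from the paper, and unfortunately it has gaps that I do not think can be repaired without essentially rediscovering the paper's construction.

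The paper's quantum estimator is not an amplitude-amplification or Grover-style procedure at all. It maintains a \emph{single} $O(\log n)$-qubit state that is a uniform superposition over directed edges seen so far (padded with dummy basis states to keep the normalization fixed), built by swapping in $\ket{\dedge{uv}}, \ket{\dedge{vu}}$ as edges arrive. When a later edge $uv$ arrives (with probability $1/k$), it measures with projectors onto $\frac{1}{\sqrt{2}}(\ket{\dedge{wu}} \pm \ket{\dedge{wv}})$ for all $w$: the $+$ outcome is enhanced and the $-$ outcome suppressed exactly when $wu$ and $wv$ are both present, i.e.\ when $uv$ closes a wedge. This is the Hidden Matching / swap-test mechanism, not amplification. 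Your per-pivot Grover picture has two structural problems. First, you cannot ``preferentially amplify amplitude on those $w$ that complete a triangle on the pivot'' as edges arrive, because you do not know that $w$ completes a triangle until \emph{both} $uw$ and $vw$ (and the closing edge) have been seen; there is no oracle to reflect about until it is too late. Second, even if one could, a single streaming pass provides each ``query'' exactly once, so the usual $\sqrt{N}$ Grover gain does not materialize.

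More importantly, your proposal is missing the paper's central technical idea, which is precisely what makes a single-pass quantum algorithm possible: the coherence between $\ket{\dedge{wu}}$ and $\ket{\dedge{wv}}$ is destroyed whenever an intervening measured edge touches $u$ or $v$. The paper handles this by splitting $T = \Tlk + \Tgk$ according to the ``degree between'' $\degb{wuv} + \degb{wvu}$, using the quantum estimator only for $\Tlk$ (triangles whose closing edge arrives before the coherence is likely lost) and a purely classical estimator for $\Tgk$ (which turns out to have bounded contribution per first-vertex, hence is cheap classically). The fifth-power exponents in the theorem arise from optimizing $k$ in the sum $(km/T)^2 + m^{3/2}\Delta_E/(T\sqrt k)$ of the two estimators' costs; without this decomposition there is no mechanism that would produce them. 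Your expectation that the $\Delta_E$ term ``reappears essentially unchanged'' as $m\Delta_E/T$ while only the $\sqrt{\Delta_V}$ term improves is also not what happens: the resulting bound $m^{8/5}\Delta_E^{4/5}/T^{6/5}$ is not of the form $m\Delta_E/T$ plus an improved $\Delta_V$ term, and the quantum/classical split is by triangle (via the degree-between parameter), not by which lower-bound instance the triangle resembles. You did correctly identify the right starting point (the Boolean Hidden Matching reduction of \cite{KP17} and its quantum weakness), and you correctly anticipated that intermediate measurements and expected-space accounting would be the delicate issues, but the specific mechanism you propose to exploit the quantum advantage does not work, and the decomposition that actually carries the argument is absent.
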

In particular, this means that when $\Delta_E = \bO{1}$, $\Delta_V = \Omega(T)
= \Omega(m)$ (i.e. maximizing the separation, as $T$ must be $O(m)$ if the
classical lower bounds are to hold), we require $\bOt{m^{2/5}}$ space instead
of the $\bOm{\sqrt{m}}$ required by any classical algorithm\footnote{It remains
open whether this is the best separation possible---the
$\bOm{\frac{m\Delta_E}{T}}$ lower bound from Indexing disappears with these
parameter settings and so it is possible that even exponential advantages can
be achieved.}.
\subsection{Other Related Work}
Other work has investigated streaming problems with quantum
\emph{inputs}~\cite{BCG13,Y20}, as well as quantum versions of models closely
related to streaming, such as online algorithms~\cite{KKM18}, limited-width
branching programs~\cite{NHK00,SS05,AAKV18,HMWW20}, and finite
automata~\cite{KW97,AF98,MC00,ANTV02}.

Of particular relevance to streaming (see, e.g.~\cite{BGW20}) is the \emph{coin
problem}, in which a coin is flipped repeatedly, and the task is to determine
whether it is $p$-biased or $(p + \varepsilon)$-biased. This problem actually
exhibits \emph{arbitrary} quantum advantage, as any classical algorithm requires
at least $\log(p(1-p)) + \log(1/\varepsilon)$ space, while a quantum algorithm
can solve it with a single qubit~\cite{AD11}, although not if $\varepsilon$ is
unknown~\cite{KO17}.

\section{Overview of the Algorithm}
\subsection{Classical Triangle Counting}
To understand how maintaining a quantum state will help us with triangle
counting, we will start by describing an optimal \emph{classical} algorithm,
from~\cite{JK21}. For ease of exposition we will consider the related problem
of \emph{triangle distinguishing}---determining whether a graph $G = (V,E)$ has
$0$ triangles or whether it has at least $T$ triangles. As is often (although
not necessarily) the case with subgraph counting problems, converting to a
counting algorithm will be almost immediate.

\begin{samepage}
\noindent
The classical algorithm is as follows:
\begin{enumerate}
\item Sample vertices with probability $p$.
\item Sample edges incident to sampled vertices with probability $q$.
\item Whenever an edge arrives that ``completes'' a pair of sampled edges
incident to some sampled vertex (``a wedge centered at a sampled vertex'') we
will note that we have found a triangle.
\end{enumerate}
\end{samepage}
For any triangle in $G$, we will find it iff we sample its ``first'' vertex,
the vertex shared by the first two of its edges to arrive, and then sample both
of those edges. We can then distinguish between graphs with $0$ triangles and
those with $T$ by reporting whether we found a triangle or not (for counting,
we count the number of triangles we found and scale it by $p^{-1}q^{-2}$). The
vertex sampling can be implemented with a pairwise independent hash function,
so the expected space needed is $\bO{pqm\log n}$ for a graph with $m$ edges and
$n$ vertices, as each edge is kept with probability $pq$. 

How small can $pq$ be? This is given by the graph parameters $T$ (the number of
triangles in the graph), $\Delta_E$ (the maximum number of triangles sharing an
edge), and $\Delta_V$ (the maximum number of triangles sharing a vertex).
Ignoring constant factors, if we want to find a triangle with constant
probability:
\begin{itemize}
\item $p$ must be at least $\Delta_V/T$, as the triangles might share as few as
$T/\Delta_V$ ``first'' vertices.
\item $pq$ must be at least $\Delta_E/T$, as there might be a set $S$,
containing as few as $T/\Delta_E$ edges, such that every triangle has an edge
from $S$ as one of its first two edges.
\item $pq^2$ must be at least $1/T$, as each triangle will be found with probability $pq^2$.
\end{itemize}
It turns out that these are also sufficient, and subject to them $pq$ is minimized when \[
p = \frac{\Delta_V}{T}, \quad \quad \quad \quad \quad q =
\max\set*{\frac{\Delta_E}{\Delta_V}, \frac{1}{\sqrt{\Delta_V}}}
\]
giving an algorithm that uses \[
\bO{\frac{m}{T}\paren*{\Delta_E + \sqrt{\Delta_V}} \log n}
\]
space. Lower bounds from~\cite{BOV13, KP17} establish that (up to log factors)
both the $\Delta_E$ and $\Delta_V$ terms are necessary. However, while the
first of these is based on a reduction from the Indexing problem, which is
known to require as much quantum communication as classical communication to
solve~\cite{ANTV02}, the latter is based on a reduction from the Boolean Hidden Matching
problem (in particular, the ``$\alpha$-Partial Matching'' variant $\aPM_n$
of~\cite{GKKRW07}), which is known to exhibit an exponential separation between
classical and quantum communication.

\subsection{Quantum Triangle Counting}
\subsubsection{Two Players}
In seeking a quantum advantage, we consider the hard instance of~\cite{KP17},
depicted in Figure~\ref{fig:clashard}.  This is as follows:
\begin{enumerate}
\item $T/\Delta_V$ stars of degree $m\Delta_V/T$ arrive. We call the central
vertices of these ``hubs'' and their neighboring vertices ``spokes''.
\item Another $m$ edges arrive, all disjoint from each other. We have that
either, for each hub, $\Delta_V$ of these edges form triangles by connecting
two spokes of the hub, or none of them do for any hub.
\end{enumerate}
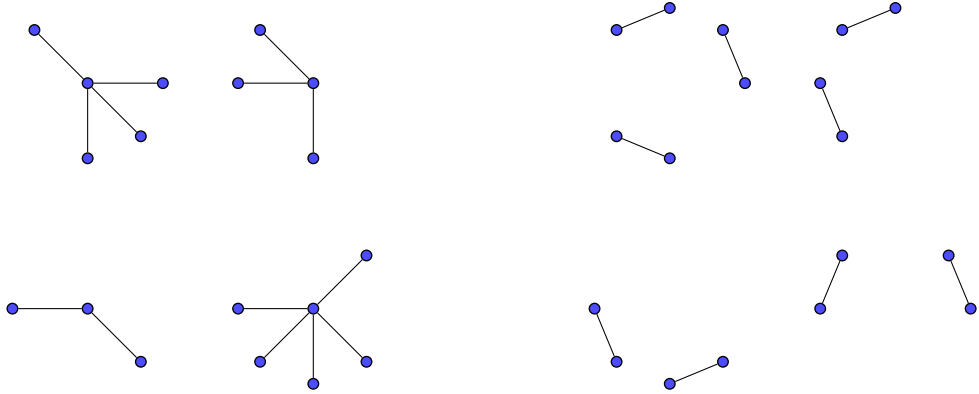
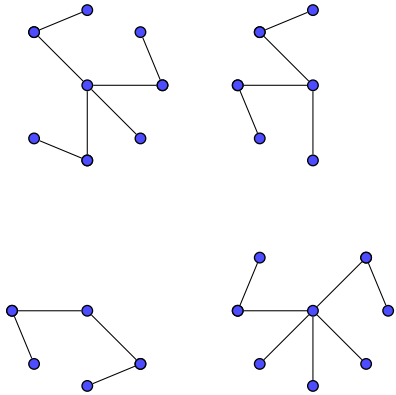
\begin{figure}
    \centering
    \begin{subfigure}[t]{0.47\textwidth}
        \centering
        \tikzset{basevertex/.style={shape=circle, line width=0.5,
        minimum size=4pt, inner sep=0pt, draw}}
        \tikzset{defaultvertex/.style={basevertex, fill=blue!70}}
        \begin{tikzpicture}[%
                VertexStyle/.style={defaultvertex},
                fat arrow/.style={single arrow,
                    thick,draw=blue!70,fill=blue!30,
                minimum height=10mm},
            scale = 1]

            \node[style={defaultvertex}](h) at (0,0) {};
            \foreach \i in {4,7} {
                \node[style={defaultvertex}](s\i) at ({1*cos(45 *
                \i)},{1*sin(45 *\i)})
                {};
                \draw (h) -- (s\i);
            }
            \node[style={defaultvertex}](h) at (3,0) {};
            \foreach \i in {1,4,5,6,7} {
                \node[style={defaultvertex}](s\i) at ({1*cos(45 *
                \i)+3},{1*sin(45 *\i)})
                {};
                \draw (h) -- (s\i);
            }
            \node[style={defaultvertex}](h) at (0,3) {};
            \foreach \i in {0,3,6,7} {
                \node[style={defaultvertex}](s\i) at ({1*cos(45 *
                \i)},{1*sin(45 *\i)+3})
                {};
                \draw (h) -- (s\i);
            }
            \node[style={defaultvertex}](h) at (3,3) {};
            \foreach \i in {3,4,6} { \node[style={defaultvertex}](s\i) at
            ({1*cos(45 * \i)+3},{1*sin(45 *\i)+3})
            {};
            \draw (h) -- (s\i);
        }
    \end{tikzpicture}
    \caption{The first half of the stream is $T/\Delta_V$ ``hubs'', each a
    vertex with edges to $m\Delta_V/T$ ``spokes''.}
\end{subfigure}
\begin{subfigure}[t]{0.47\textwidth}
    \centering
    \tikzset{basevertex/.style={shape=circle, line width=0.5,
    minimum size=4pt, inner sep=0pt, draw}}
    \tikzset{defaultvertex/.style={basevertex, fill=blue!70}}
    \begin{tikzpicture}[%
            VertexStyle/.style={defaultvertex},
            fat arrow/.style={single arrow,
                thick,draw=blue!70,fill=blue!30,
            minimum height=10mm},
        scale = 1]
        \foreach \i in {4,5,6,7} {
            \node[style={defaultvertex}](s\i) at ({5 + 1*cos(45 *
            \i)},{1*sin(45
            *\i)}) {};
        }
        \draw (s4) -- (s5);
        \draw (s6) -- (s7);
        \foreach \i in {0,1,3,4} {
            \node[style={defaultvertex}](s\i) at ({5 + 1*cos(45 *
            \i)+3},{1*sin(45 *\i)}) {};
        }
        \draw (s0) -- (s1);
        \draw (s3) -- (s4);
        \foreach \i in {0,1,2,3,5,6} {
            \node[style={defaultvertex}](s\i) at ({5 + 1*cos(45 *
            \i)},{1*sin(45 *\i)+3}) {};
        }
        \draw (s0) -- (s1);
        \draw (s2) -- (s3);
        \draw (s5) -- (s6);
        \foreach \i in {2,3,4,5} {
            \node[style={defaultvertex}](s\i) at ({5 + 1*cos(45 *
            \i)+3},{1*sin(45 *\i)+3}) {};
        }
        \draw (s2) -- (s3);
        \draw (s4) -- (s5);
    \end{tikzpicture}
    \caption{They are followed by $m$ edges, disjoint from each other but
    potentially incident to the ``spoke'' vertices of the hubs.}
\end{subfigure}
    \begin{subfigure}[t]{\textwidth}
        \centering
        \tikzset{basevertex/.style={shape=circle, line width=0.5,
        minimum size=4pt, inner sep=0pt, draw}}
        \tikzset{defaultvertex/.style={basevertex, fill=blue!70}}
        \begin{tikzpicture}[%
                VertexStyle/.style={defaultvertex},
                fat arrow/.style={single arrow,
                    thick,draw=blue!70,fill=blue!30,
                minimum height=10mm},
            scale = 1]

            \node[style={defaultvertex}](h) at (0,0) {};
            \foreach \i in {4,7} {
                \node[style={defaultvertex}](s\i) at ({1*cos(45 *
                \i)},{1*sin(45 *\i)})
                {};
                \draw (h) -- (s\i);
            }
            \node[style={defaultvertex}](h) at (3,0) {};
            \foreach \i in {1,4,5,6,7} {
                \node[style={defaultvertex}](s\i) at ({1*cos(45 *
                \i)+3},{1*sin(45 *\i)})
                {};
                \draw (h) -- (s\i);
            }
            \node[style={defaultvertex}](h) at (0,3) {};
            \foreach \i in {0,3,6,7} {
                \node[style={defaultvertex}](s\i) at ({1*cos(45 *
                \i)},{1*sin(45 *\i)+3})
                {};
                \draw (h) -- (s\i);
            }
            \node[style={defaultvertex}](h) at (3,3) {};
            \foreach \i in {3,4,6} { \node[style={defaultvertex}](s\i) at
            ({1*cos(45 * \i)+3},{1*sin(45 *\i)+3})
            {};
            \draw (h) -- (s\i);
        }
        \foreach \i in {4,5,6,7} {
            \node[style={defaultvertex}](s\i) at ({1*cos(45 *
            \i)},{1*sin(45
            *\i)}) {};
        }
        \draw (s4) -- (s5);
        \draw (s6) -- (s7);
        \foreach \i in {0,1,3,4} {
            \node[style={defaultvertex}](s\i) at ({1*cos(45 *
            \i)+3},{1*sin(45 *\i)}) {};
        }
        \draw (s0) -- (s1);
        \draw (s3) -- (s4);
        \foreach \i in {0,1,2,3,5,6} {
            \node[style={defaultvertex}](s\i) at ({1*cos(45 *
            \i)},{1*sin(45 *\i)+3}) {};
        }
        \draw (s0) -- (s1);
        \draw (s2) -- (s3);
        \draw (s5) -- (s6);
        \foreach \i in {2,3,4,5} {
            \node[style={defaultvertex}](s\i) at ({1*cos(45 *
            \i)+3},{1*sin(45 *\i)+3}) {};
        }
        \draw (s2) -- (s3);
        \draw (s4) -- (s5);
    \end{tikzpicture}
    \caption{The union of these two halves may contain $0$ or $T$ triangles.}
\end{subfigure}\caption{A hard graph for classical algorithms, when $\Delta_E = 1$ but
$\Delta_V$ is unrestricted.}
\label{fig:clashard}
\end{figure}

We will start by considering this in the simpler\footnote{Any streaming
algorithm for this problem immediately gives a one-way protocol for the
two-player version, with message size equal to the space needed by the
algorithm. Alice can run the streaming algorithm on her input, send the
algorithm's state to Bob, and then he can initialize it with that state and run
it on his input.} \emph{two-player} setting---Alice gets the first half of the
stream, Bob gets the second half of the stream, and Alice wants to send Bob a
message that he can use with his input to determine whether the graph has 0 or
$T$ triangles.

If she wanted to do this by sending some subset of the edges, she would need to
send $\bO{\frac{m\sqrt{\Delta}}{T}}$ of them---within any given hub the edges
are indistinguishable, so at best she can choose one specific hub and send a
$1/\sqrt{\Delta_V}$ fraction of its edges, to have a $(1/\sqrt{\Delta_V})^2 =
1/\Delta_V$ chance of finding any given one of its $\Delta_V$
triangles\footnote{This is essentially identical to running the algorithm
of~\cite{JK21} on the input.}.  By embedding an instance of
$\PM{(T/m)}_{m\Delta_V/T}$ in a hub, and then copying that hub $T/\Delta_V$
times, it can be shown that there is no asymptotically better classical message
Alice can send.

What if Alice is allowed to send a quantum message? We cannot emulate the
$\aPM$ protocol of~\cite{GKKRW07} directly, as not all graphs of this form will
correspond to an embedding of $\aPM$. Instead, if the set of hub vertices is $H$,
with spoke vertices $S_u$ for each $u \in H$, Alice may construct the $O(\log
m)$-bit quantum state \[
\frac{1}{\sqrt{m}}\sum_{u \in H}\sum_{v \in S_u}\ket{\dedge{uv}}
\]
where $\dedge{uv}$ denotes the \emph{directed} edge from\footnote{Of course, in
the general case, we won't know which vertices are hubs and which are spokes,
so each edge $uv$ will need to be included as both $\dedge{uv}$ and
$\dedge{vu}$.} $u$ to $v$. As Bob's edges are disjoint, he can then construct
an orthonormal basis of $\Cbb^{\abs{V^2}}$ which contains \[
\frac{\ket{\dedge{wu}} + \ket{\dedge{wv}}}{\sqrt{2}}, \frac{\ket{\dedge{wu}} -
\ket{\dedge{wv}}}{\sqrt{2}}
\]
for every $w \in V$ and edge $uv$ in his set of $m$ edges. If he measures
Alice's state in this basis, he will see:
\begin{itemize}
\item Each basis element of the form $\frac{\ket{\dedge{wu}} +
\ket{\dedge{wv}}}{\sqrt{2}}$ with probability $1/2m$ if Alice has a hub $w$
with exactly one of the spokes $u, v$, and $2/m$ if Alice has a hub $w$ with
\emph{both} of these as spokes (i.e.\ if $uv$ completes a triangle in Alice's
input).
\item Each basis element of the form $\frac{\ket{\dedge{wu}} - \ket{\dedge{wv}}}{\sqrt{2}}$ with probability $1/2m$ if Alice has a hub $w$
with exactly one of the spokes $u, v$, and 0 if Alice has a hub $w$ with both
of these as spokes.
\end{itemize}
So if $G$ is triangle-free these are the same, but if $G$ has $T$ triangles they differ by $2T/m$, and so Bob can work out which situation they are in if Alice sends him $\Theta(m^2/T^2)$ copies of this state, at the cost of $\bT{\frac{m^2}{T^2}\log m}$ qubits.

To generalize this technique, we will need to address two questions: how to
construct and measure the state in the stream, and what to do when the edges we
want to measure by are \emph{not} disjoint, and therefore do not give an
orthonormal basis.
\subsubsection{Streaming}
\paragraph{Constructing the State} We start by constructing the superposition \[
\frac{1}{\sqrt{2m}}\sum_{i=1}^{2m} \ket{i}
\]
of $2m$ ``dummy'' states. Then, whenever we process the $i^\text{th}$ edge $uv$
in the stream, we swap the dummy states $\ket{2i-1}$, $\ket{2i}$ for
$\ket{\dedge{uv}}$, $\ket{\dedge{vu}}$.

\paragraph{Measurements} If we could remember all of the measurements we want
to make, we could take this state and perform the measurements we made in the
two-player game at the end of the stream. However, this would require
remembering every edge we see, so instead we perform the measurements one edge
at a time.

When the edge $uv$ arrives, we construct the set of projectors $\Oc^{uv} =
\set{O^{uv}_b : b \in \set{-1,0,1}}$ given by
\begin{align*}
O^{uv}_1 &= \frac{1}{2}\sum_{w \in V} (\ket{\dedge{wu}} + \ket{\dedge{wv}})(\bra{\dedge{wu}} + \bra{\dedge{wv}})\\
O^{uv}_{-1} &= \frac{1}{2}\sum_{w \in V} (\ket{\dedge{wu}} - \ket{\dedge{wv}})(\bra{\dedge{wu}} - \bra{\dedge{wv}})\\
O^{uv}_0 &= I - \sum_{w \in V}(\ket{\dedge{wu}}\bra{\dedge{wu}} + \ket{\dedge{wv}}\bra{\dedge{wv}})
\end{align*}
and measure with them\footnote{If performing all the measurements at the end of
the stream is desired, an alternative with the same outcome (and using only a
constant factor extra qubits) is to perform the unitary swapping
$\frac{\ket{\dedge{wu}} + \ket{\dedge{wv}}}{2}$ with $\ket{\dedge{wuv}+}$ and
$\frac{\ket{\dedge{wu}} - \ket{\dedge{wv}}}{2}$ with $\ket{\dedge{wuv}-}$ for
each $w$, and then measuring in the standard basis at the end of the stream.},
with $b$ the outcome corresponding to the operator $O_b^{uv}$. If we see $+1$
or $-1$ we terminate the algorithm and return with that value, while if we see
$0$ we continue processing the
stream.

This gives us a somewhat different result from performing the measurement at
the end of the stream, since now when we measure by an edge $uv$ we can only
pick up on triangles $wuv$ such that $wu$ and $wv$ appear before $uv$. However,
this is not a concern, as for each triangle there will be exactly one final
edge that arrives.

\paragraph{Non-Disjoint Edges} However, there is a problem with this strategy.
The two-player strategy worked because the edges in Bob's set $B$ were
\emph{disjoint}, and so the elements \[
\set*{\frac{\ket{\dedge{wu}} + \ket{\dedge{wv}}}{\sqrt{2}},
\frac{\ket{\dedge{wu}} - \ket{\dedge{wv}}}{\sqrt{2}} : w \in V, uv \in B}
\]
could be a subset of an orthonormal basis of $\Rbb^{\abs{V^2}}$. But in the
general setting, we have to measure by every edge, as we do not know in advance
which edges will be ``last edges'' of triangles. 

Now, the individual edge measurements described in the previous section are
still valid, but what will be the impact of measuring by them? After measuring
with $\Oc^{uv}$, if we \emph{do not} terminate the algorithm, then the state
has been projected onto $O^{uv}_0$ (and re-weighted appropriately). This means
that for all $w \in V$, if either $\ket{\dedge{wu}}$ or $\ket{\dedge{wv}}$ were
present in the superposition, they will now be gone.

The consequence of this is that instead of an estimator of the number of
triangles in the graph, we instead have an estimator of the number of triangles
$wuv$ such that $wu$, $wv$ appear in the stream and then no edges incident to
$u$ or $v$ arrive before $uv$ does. However, this is not very useful, as this
could easily be $0$ even in a graph with many triangles.

In order to have at least some chance of finding triangles that do not fit this
description, we will refrain from measuring with \emph{all} of the edges we
see. Suppose that when seeing the edge $uv$ we only perform the $\Oc^{uv}$
measurement with probability $1/k$, for some $k$. Then, for every triangle
$wuv$, if its edges arrive in the order $wu$, $wv$, $uv$, the probability that,
$\ket{\dedge{wu}}$ and $\ket{\dedge{wv}}$ are still in the state when $uv$
arrives is \[
(1 - 1/k)^{\degb{wuv} + \degb{wvu}}
\]
where $\degb{wuv}$ denotes the degree between $wu$ and $uv$, the number of
edges incident to $u$ that arrive in between $wu$ and $uv$. So as the
measurement $\Oc^{uv}$ itself is performed with probability $1/k$, the
probability that we perform it \emph{and} both $\ket{\dedge{wu}}$ and
$\ket{\dedge{wv}}$ are in the state at the time is \[
\frac{1}{k}(1 - 1/k)^{\degb{wuv} + \degb{wvu}}\text{.}
\]

This means that any triangle with $\degb{wuv} + \degb{wvu} \le k$ has at least
an $\bOm{1/k}$ probability of contributing to our estimator. More specifically,
we can estimate $\Tlk$, given by down-weighting every triangle by $(1 -
1/k)^{\degb{wvu} + \degb{wuv}}$, by running $\bT{\paren{km/\Tlk}^2}$ copies of
the estimator in parallel, as now the probability of seeing $+1$ is only
$m/k\Tlk$ greater than the probability of seeing $-1$. If we only care to
estimate it to $\varepsilon T$ accuracy for some constant $\varepsilon$ (i.e.\
we are fine having a poor multiplicative estimate when $\Tlk$ is small) we can
replace this with $\bT{\paren{km/T}^2}$ copies instead.

However, this is not helpful on its own, as it is possible to have a graph
stream where $\Tlk$ is much smaller than $T$ for any $k \ll m$. Consider the
following stream, depicted in Figure~\ref{fig:quahard}:
\begin{enumerate}
\item The edges $(uw_i)_{i=1}^T$ and $(w_iv_i)_{i=1}^T$ arrive for some
sequences of unique vertices $(w_i)_{i=1}^m$, $(v_i)_{i=1}^m$ and some unique
fixed vertex $u$.
\item The edges $(uz_i)_{i=1}^m$ arrive for some sequence of unique vertices
$(z_i)_{i=1}^m$.
\item The edges $(uv_i)_{i=1}^T$ arrive.
\end{enumerate}
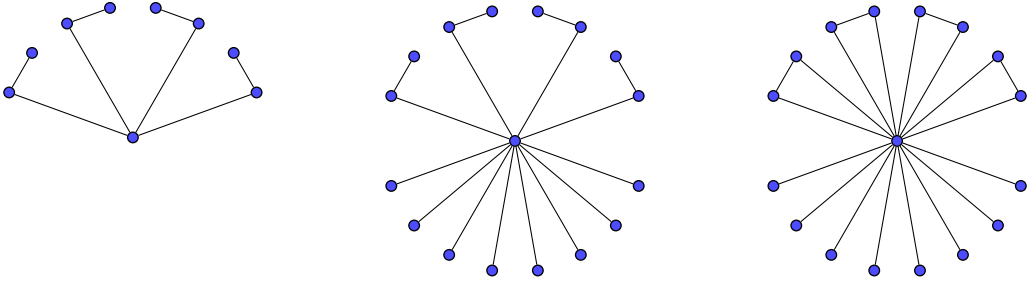
\begin{figure}
    \centering
    \begin{subfigure}[t]{0.30\textwidth}
        \centering
        \tikzset{basevertex/.style={shape=circle, line width=0.5,
        minimum size=4pt, inner sep=0pt, draw}}
        \tikzset{defaultvertex/.style={basevertex, fill=blue!70}}
        \begin{tikzpicture}[%
                VertexStyle/.style={defaultvertex},
                fat arrow/.style={single arrow,
                    thick,draw=blue!70,fill=blue!30,
                minimum height=10mm},
            scale = 1]

            \node[style={defaultvertex}](h) at (0,0) {};
            \foreach \i in {1,3} {
                \node[style={defaultvertex}](s\i) at ({1.75*cos(20 *
                \i)},{1.75*sin(20 *\i)})
                {};
                \node[style={defaultvertex}](s{\i+1}) at ({1.75*cos(20 *
                (\i+1))},{1.75*sin(20 *(\i+1))})
                {};
                \draw (h) -- (s\i);
                \draw (s\i) -- (s{\i+1});
            }
            \foreach \i in {6,8} {
                \node[style={defaultvertex}](s\i) at ({1.75*cos(20 *
                \i)},{1.75*sin(20 *\i)})
                {};
                \node[style={defaultvertex}](s{\i-1}) at ({1.75*cos(20 *
                (\i-1))},{1.75*sin(20 *(\i-1))})
                {};
                \draw (h) -- (s\i);
                \draw (s\i) -- (s{\i-1});
            }
            \foreach \i in {10,...,17} {
                \node(s\i) at ({1.75*cos(20 *
                \i)},{1.75*sin(20 *\i)})
                {};
            }
        \end{tikzpicture}
        \caption{$T$ wedges $(uw_iv_i)_{i=1}^T$ arrive, all incident to $u$.}
    \end{subfigure}
    \begin{subfigure}[t]{0.30\textwidth}
        \centering
        \tikzset{basevertex/.style={shape=circle, line width=0.5,
        minimum size=4pt, inner sep=0pt, draw}}
        \tikzset{defaultvertex/.style={basevertex, fill=blue!70}}
        \begin{tikzpicture}[%
                VertexStyle/.style={defaultvertex},
                fat arrow/.style={single arrow,
                    thick,draw=blue!70,fill=blue!30,
                minimum height=10mm},
            scale = 1]

            \node[style={defaultvertex}](h) at (0,0) {};
            \foreach \i in {1,3} {
                \node[style={defaultvertex}](s\i) at ({1.75*cos(20 *
                \i)},{1.75*sin(20 *\i)})
                {};
                \node[style={defaultvertex}](s{\i+1}) at ({1.75*cos(20 *
                (\i+1))},{1.75*sin(20 *(\i+1))})
                {};
                \draw (h) -- (s\i);
                \draw (s\i) -- (s{\i+1});
            }
            \foreach \i in {6,8} {
                \node[style={defaultvertex}](s\i) at ({1.75*cos(20 *
                \i)},{1.75*sin(20 *\i)})
                {};
                \node[style={defaultvertex}](s{\i-1}) at ({1.75*cos(20 *
                (\i-1))},{1.75*sin(20 *(\i-1))})
                {};
                \draw (h) -- (s\i);
                \draw (s\i) -- (s{\i-1});
            }
            \foreach \i in {10,...,17} {
                \node[style={defaultvertex}](s\i) at ({1.75*cos(20 *
                \i)},{1.75*sin(20 *\i)})
                {};
                \draw (h) -- (s\i);
            }
        \end{tikzpicture}
        \caption{$m$ more edges $(uz_i)_{i=1}^m$ arrive incident to $u$.}
    \end{subfigure}
    \begin{subfigure}[t]{0.30\textwidth}
        \centering
        \tikzset{basevertex/.style={shape=circle, line width=0.5,
        minimum size=4pt, inner sep=0pt, draw}}
        \tikzset{defaultvertex/.style={basevertex, fill=blue!70}}
        \begin{tikzpicture}[%
                VertexStyle/.style={defaultvertex},
                fat arrow/.style={single arrow,
                    thick,draw=blue!70,fill=blue!30,
                minimum height=10mm},
            scale = 1]

            \node[style={defaultvertex}](h) at (0,0) {}; 
            \foreach \i in {1,3} { 
                \node[style={defaultvertex}](s\i) at ({1.75*cos(20 *
                \i)},{1.75*sin(20 *\i)})
                {};
                \node[style={defaultvertex}](s{\i+1}) at ({1.75*cos(20 *
                (\i+1))},{1.75*sin(20 *(\i+1))})
                {};
                \draw (h) -- (s\i);
                \draw (s\i) -- (s{\i+1});
                \draw (s{\i+1}) -- (h);
            }
            \foreach \i in {6,8} { 
                \node[style={defaultvertex}](s\i) at ({1.75*cos(20 *
                \i)},{1.75*sin(20 *\i)})
                {};
                \node[style={defaultvertex}](s{\i-1}) at ({1.75*cos(20 *
                (\i-1))},{1.75*sin(20 *(\i-1))})
                {};
                \draw (h) -- (s\i);
                \draw (s\i) -- (s{\i-1});
                \draw (s{\i-1}) -- (h);
            }
            \foreach \i in {10,...,17} {
                \node[style={defaultvertex}](s\i) at ({1.75*cos(20 *
                \i)},{1.75*sin(20 *\i)})
                {};
                \draw (h) -- (s\i);
            }
        \end{tikzpicture}
        \caption{$T$ triangles
$(w_iuv_vi)_{i=1}^T$ are formed, each with $\degb{w_iuv_i} = m$.}
    \end{subfigure}
    \caption{A hard graph for our quantum estimator, with
        $\degb{wuv} + \degb{wvu}$ large for every triangle $wuv$.}
        \label{fig:quahard}
    \end{figure}

Now every triangle $w_iuv_vi$ in this stream has $\degb{w_iuv_i} = m$, and so
$\Tlk$ will be very small if $k \ll m$.

However, even though this corresponds to a graph where $\Delta_V$ is large, it
will still be easy for the \emph{classical} algorithm we described at the start
of this section.

\paragraph{Hybrid Quantum-Classical Algorithm} In the stream described above,
the first two edges of each triangle $w_iuv_i$ are $w_iu$, $w_iv_i$, and so if
we run the classical algorithm from the start of this section with $q = 1$, the
triangle will be found if $w_i$ is sampled by the vertex sampling stage. As the
vertices $(w_i)_{i=1}^T$ are disjoint, the classical algorithm can succeed with
$p = \bT{1/T}$, for a classical space complexity of $\bO{m/T}$. 

More generally, if $\bOm{m/k}$ triangles $(wu_iv_i)_{i=1}^{m/k}$ share the same
``first'' vertex $w$, with distinct $(u_i)_{i=1}^{m/k}$,  $(v_i)_{i=1}^{m/k}$
then \[
\sum_{i=1}^{m/k} \paren*{\degb{wv_iu_i} + \degb{wu_iv_i}} \le
\sum_{i=1}^{m/k}\paren*{d_{u_i} + d_{v_i}} \le 2m
\]
and so the \emph{average} value of $ \paren*{\degb{wu_iv_i} + \degb{wv_iu_i}}$
across these triangles is at most $\bO{k}$. This means that, if we consider the
set of triangles $wuv$ with $\degb{wuv} + \degb{wvu} \ge k$, at most
$\bO{m/k}$ of them can share any one ``first'' vertex. 

This means that a classical algorithm can count them to $\varepsilon T$
accuracy (for some constant $\varepsilon$) with $s$ space by sampling vertices
with probability $p = m/T k$ and incident edges with probability
$\sqrt{k/m}$, for $\bO{m^{3/2}/T \sqrt{k}}$ total samples in expectation.
Moreover, by maintaining degree counters for the endpoints of each edge it
samples, such an algorithm can record $\degb{wuv} + \degb{wvu}$ for each
triangle $wuv$ it samples, and therefore estimate\footnote{Technically the
algorithm described here only allows estimating the sum of $1 - (1 -
1/k)^{\degb{wuv} + \degb{wvu}}$ over triangles where $\degb{wuv} + \degb{wvu}
\ge k$. But by analyzing the variance of the estimator more carefully it is
possible to replace this with the sum of that over \emph{all} triangles, i.e.\
$T - \Tlk$.} $\Tgk = T - \Tlk$.

In other words, triangles that are hard for the classical algorithm to count
are easier for the quantum algorithm to count, and vice-versa. This suggests
the following hybrid algorithm:
\begin{enumerate}
\item Choose $k$ appropriately.
\item Use the quantum algorithm to estimate $\Tlk$ to $\varepsilon T/2$ error.
\item Use the classical algorithm to estimate $\Tgk = T - \Tlk$ to $\varepsilon
T/2$ error.
\item Return the sum of the estimates.
\end{enumerate}
What should $k$ be? We want to minimize \[
\paren*{\frac{km}{T}}^2 + \frac{m^{3/2}}{T\sqrt{k}}\text{.}
\]
Setting $k = T^{2/5}/m^{1/5}$ gives us a \[
\bOt{\frac{m^{8/5}}{T^{6/5}}}
\]
space algorithm, that becomes \[
\bOt{\frac{m^{8/5}}{T^{6/5}} \Delta_E^{4/5}}
\]
when our bounds are modified to account for up to $\Delta_E$ triangles sharing
an edge. When e.g. $\Delta_E = O(1)$, $\Delta_V = \Omega(T)$ and $T =
\lOm{m^{6/7}}$, this is less space than the best possible classical algorithm.

\section{Preliminaries}
\subsection{General Notation}
$k \in \brac{0, m}$ is a parameter shared by the quantum and classical
algorithms, to be specified later.

Let $G = (V,E)$ be a graph on $n$ vertices, received as a stream of undirected
edges, adversarially ordered. Let $m \le \binom{n}{2}$ be an upper bound on
the number of edges in the graph (and thus the number of updates in the
stream). We write the stream $\sigma = (\sigma_i)_{i=1}^m$, for $\sigma \in E$.
We will write $\sigma^{\le t} = (\sigma_i)_{i=1}^t$. 

We will write $N(v)$ for the neighborhood of any $v \in V$, and $d_v$ for
$\abs{N(v)}$.

We will use $\dedge{uv}$ to denote a \emph{directed} edge from $u$ to $v$, and
$uv$ (or $vu$) to refer to the undirected edge (and so $uv = vu$ while
$\dedge{uv} \not= \dedge{vu}$). We will write $\dedge{E}$ for the set of
directed edges and $E$ for the set of edges.

We will use $\Ibb(p)$ to denote the indicator on whether the predicate $p$
holds.

\subsection{Triangles}
We use $T$ to refer to the number of triangles in $G$, $\Delta_E \ge 1$ to
refer to the maximum number of them sharing a single edge (or 1 if $G$ is
triangle-free).

Fix any ordering of the stream. For any edges $e, f$, we will write $e \bef f$
if $e$ arrives before $f$ in the stream. For any vertices $u,v,w \in V$ such
that $uv, vw \in E$ and $uv \bef vw$, let the degree between $uv$ and $vw$,
$\degb{uvw}$ be the number of edges incident to $v$ that arrive in between $uv$
and $vw$ (not including $uv$ or $vw$ themselves).

For any triple of vertices $(u,v,w) \in V^3$ let \[
\tlk_{uvw} = \begin{cases}
(1 - 1/k)^{\degb{uvw} + \degb{uwv}} &\mbox{if $\set{u, v, w}$ is a triangle in
the graph and $uv \bef uw \bef vw$}\\
0 &\mbox{otherwise.}
\end{cases}
\]
Likewise, let \[
\tgk_{uvw} = \begin{cases}
1 - (1 - 1/k)^{\degb{uvw} + \degb{uwv}} &\mbox{if $\set{u, v, w}$ is a triangle
in the graph and $uv \bef uw \bef vw$}\\
0 &\mbox{otherwise.}
\end{cases}
\]
We will write $\Tlk, \Tgk$ for $\sum_{(u,v,w) \in V^3} \tlk_{uvw}$, $\sum_{(u,v,w)
\in V^3} \tgk_{uvw}$, respectively, so that $T = \Tlk + \Tgk$.

For any vertex $u \in V$, we will write $\Tlk_{u} = \sum_{(v,w) \in V^2}
\tlk_{uvw}$ and $\Tgk_{u} = \sum_{(v,w) \in V^2} \tgk_{uvw}$, so $\sum_{u \in V}
\Tlk_u = \Tlk$ and $\sum_{u \in V} \Tgk_u = \Tgk$.

\section{Quantum Estimator}
Each instance of the quantum algorithm will maintain $\beta = 2\ceil{\log n} +
1$ qubits, indexing the set $\dedge{E} \cup \brac{2m}$. We will write the basis
states as $\ket{\dedge{uv}}$, $\ket{t}$ for $\dedge{uv} \in \dedge{E}$, $t \in
\brac{2m}$.

Let $f : \brac{m} \rightarrow \bool$ be a fully independent hash function such that \[
f(t) = \begin{cases}
1 & \mbox{with probability $1/k$}\\
0 & \mbox{otherwise.}
\end{cases}
\]
While $f$ is a fully random function, and so would be infeasible to store, our
algorithm will only need to query $f(t)$ at the time step $t$, and therefore
will not need to store it.

After the $t^\text{th}$ update, the algorithm will either terminate or maintain
the state \[
\Sigma_t = \frac{\sum_{i=2t+1}^{2m} \ket{i} + \sum_{\dedge{uv} \in S_t}
\ket{\dedge{uv}}}{\sqrt{2m - 2t + |S_t|}}
\]
where \[
S_t = \set{\dedge{uv} : \exists i \in \brac{t}, \sigma_i = uv, \forall j =  i+1,
\dots, t,  f(j) = 0 \vee v \not \in \sigma_j}.
\] 
That is, $S_t$ contains the directed edges $\dedge{uv}$ and $\dedge{vu}$ for
every edge $uv$ that has arrived at time $t$, except that whenever an edge $wz$
arrives at time $s$, if $f(s) = 1$ all edges directed towards either $w$ or $z$
are removed.

At each step $t$, the algorithm will first apply a unitary transformation,
depending on $t$ and the edge $\sigma_t$, to take $\Sigma_{t-1}$ to $\Sigma_t$,
and then, if $f(t) = 1$, it will measure $\Sigma_t$ with an operator depending
only on $\sigma_t$. We now define this transformation and measurement operator.

\begin{definition}
Take $\set{\ket{x} : x \in \dedge{E} \cup \brac{m}}$ and extend it to a basis
of $\Cbb^{\beta}$. For each $t \in \brac{m}$, $uv \in E$, if $u < v$ the
unitary transformation $U^t_{uv}$ is given by swapping the basis elements
$\ket{(2t-1)}$ and $\ket{\dedge{uv}}$, and swapping $\ket{2t}$ and
$\ket{\dedge{vu}}$. If $v < u$, $U^t_{uv} = U^t_{vu}$.
\end{definition}

\begin{definition}
For any $uv \in E$, the set of measurement operators $\Oc^{uv} = \set{O^{uv}_b
: b \in \set{-1,0,1}}$ is given by
\begin{align*}
O^{uv}_1 &= \frac{1}{2}\sum_{w \in V} (\ket{\dedge{wu}} +
\ket{\dedge{wv}})(\bra{\dedge{wu}} + \bra{\dedge{wv}})\\
O^{uv}_{-1} &= \frac{1}{2}\sum_{w \in V} (\ket{\dedge{wu}} -
\ket{\dedge{wv}})(\bra{\dedge{wu}} - \bra{\dedge{wv}})\\
O^{uv}_0 &= I - \sum_{w \in V}(\ket{\dedge{wu}}\bra{\dedge{wu}} +
\ket{\dedge{wv}}\bra{\dedge{wv}})
\end{align*}
with $b$ the outcome corresponding to operator $O^{uv}_b$.
\end{definition}
Note that $O^{uv}_1O^{uv}_{-1} = 0$, and $O^{uv}_0 = I - O^{uv}_1 -
O^{uv}_{-1}$, so this is a complete set of orthogonal projectors.

We can now define the algorithm.

\begin{algorithm}[H]
\caption{Quantum estimator for $\Tlk$}
\begin{algorithmic}[1]
\Procedure{QuantumEstimator}{$k$}
\State $t \gets 1$ 
\State $\Sigma_t \gets \frac{1}{\sqrt{m}}\sum_{i=1}^m \ket{i}$
\For{each update $uv$}
\State $\Sigma_t \gets U^t_{uv}\Sigma_{t-1}$
\If{$f(t) = 1$} \Comment{$f(t)$ is not re-used so we can generate it here.}
\State Measure $\Sigma_t$ with the operators $\Oc^{uv}$, storing the result in
$\bb$.
\If{$\bb \not= 0$}
\State \return $\bb$
\EndIf
\EndIf
\State $t \gets t + 1$
\EndFor
\State $\bb \gets 0$
\State \return $\bb$
\EndProcedure
\end{algorithmic}
\end{algorithm}

\begin{lemma}
\label{lm:stateinv}
For all $t = 0, \dots, m$, after $\questimator(k)$ has processed $t$ updates,
either it will have returned or \[
\Sigma_t = \frac{\sum_{i=2t+1}^{2m} \ket{i} + \sum_{\dedge{uv} \in S_t}
\ket{\dedge{uv}}}{\sqrt{2m - 2t + |S_t|}}
\]
where \[
S_t = \set{\dedge{uv} : \exists i \in \brac{t}, uv = \sigma_i, \forall j =
i+1, \dots, t,  f(j) = 0 \vee v \not \in \sigma_j}\text{.}
\]
\end{lemma}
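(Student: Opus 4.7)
The plan is to prove the lemma by induction on $t$, tracking how both the quantum state $\Sigma_t$ and the combinatorial set $S_t$ evolve in lockstep under the three operations the algorithm can perform at each step: the unitary swap $U^t_{uv}$, an optional measurement with $\Oc^{uv}$ when $f(t) = 1$, and renormalization after a projection. For the base case at $t = 0$, $S_0 = \emptyset$ and $\Sigma_0$ is the uniform superposition over the $2m$ dummy indices, so both sides of the claimed identity coincide.

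For the inductive step, I would fix $t \ge 1$, let $\sigma_t = uv$, and assume the algorithm has not yet returned and $\Sigma_{t-1}$ has the stated form. First I would apply $U^t_{uv}$: since this unitary swaps $\ket{2t-1} \leftrightarrow \ket{\dedge{uv}}$ and $\ket{2t} \leftrightarrow \ket{\dedge{vu}}$, and since under the simple-graph assumption neither $\dedge{uv}$ nor $\dedge{vu}$ is already in $S_{t-1}$, the resulting state is exactly
\[
\frac{\sum_{i=2t+1}^{2m}\ket{i} + \sum_{\dedge{xy} \in S_{t-1} \cup \set{\dedge{uv},\dedge{vu}}}\ket{\dedge{xy}}}{\sqrt{2m - 2(t-1) + |S_{t-1}|}},
\]
with unchanged normalization. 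If $f(t) = 0$, no measurement occurs, and a direct check of the definition of $S_t$ shows that the condition for $j = t$ is vacuous, so $S_t = S_{t-1} \cup \set{\dedge{uv},\dedge{vu}}$, matching the state (and $2m - 2(t-1) + |S_{t-1}| = 2m - 2t + |S_t|$ gives the correct denominator).

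The one substantive case is $f(t) = 1$ with measurement outcome $0$, which is where I expect the bookkeeping to be the most delicate. The key observation is that $O^{uv}_0 = I - \sum_{w \in V}(\ket{\dedge{wu}}\bra{\dedge{wu}} + \ket{\dedge{wv}}\bra{\dedge{wv}})$ acts on the post-swap state simply by deleting every basis vector $\ket{\dedge{xy}}$ whose head $y$ lies in $\set{u,v}$, while leaving the dummy indices $\ket{2t+1}, \ldots, \ket{2m}$ untouched. In particular, the two directed edges $\dedge{uv}$ and $\dedge{vu}$ just inserted are themselves deleted, as are exactly those pre-existing $\dedge{xy} \in S_{t-1}$ with $y \in \set{u,v}$. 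Matching this against the definition of $S_t$, the $j = t$ condition now reads $y \notin \set{u,v}$, so the surviving directed edges are precisely $\set{\dedge{xy} \in S_{t-1} : y \notin \set{u,v}} = S_t$.

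Finally I would verify that, after the projection, renormalization yields the stated denominator $\sqrt{2m - 2t + |S_t|}$: the unnormalized post-projection state is a sum of $(2m - 2t) + |S_t|$ unit-weight basis vectors each with amplitude $1/\sqrt{2m - 2(t-1) + |S_{t-1}|}$, and dividing by the square root of the total probability of outcome $0$ gives exactly the required form. Since the outcomes $\pm 1$ trigger a return and fall outside the invariant's scope, this completes the inductive step and hence the lemma. The main obstacle, as indicated, is the careful accounting in the measurement case — matching the set-theoretic description of $S_t$ to the specific basis vectors annihilated by $O^{uv}_0$ — but once one observes that $O^{uv}_0$ acts as a coordinate projector onto directed edges whose head avoids $\set{u,v}$, the correspondence is immediate.
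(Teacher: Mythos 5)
Your overall plan --- induction on $t$, with the $f(t)=1$, outcome-$0$ branch as the only substantive case --- is the same as the paper's, but the inductive step contains a concrete error that prevents it from proving the stated invariant. You claim that after the $O^{uv}_0$ projection (where $\sigma_t = uv$) ``the surviving directed edges are precisely $\set{\dedge{xy} \in S_{t-1} : y \notin \set{u,v}} = S_t$.'' That final equality is false. Reread the definition: a directed edge $\dedge{ab}$ with $\sigma_i = ab$ lies in $S_t$ iff for every $j \in \set{i+1, \dots, t}$ we have $f(j) = 0$ or $b \notin \sigma_j$. Taking $i = t$ makes the quantifier range empty, so the two freshly inserted directed edges $\dedge{uv}$ and $\dedge{vu}$ are \emph{always} in $S_t$, including when $f(t)=1$. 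Your candidate for $S_t$ is missing exactly these two elements, and that discrepancy also throws off the normalization count $2m - 2t + \abs{S_t}$.

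The companion error is your claim that $O^{uv}_0$ deletes $\ket{\dedge{uv}}$ and $\ket{\dedge{vu}}$. If one literally substitutes $w = u$ and $w = v$ into $\sum_{w \in V}$ one gets terms involving the nonexistent self-loops $\dedge{uu}$, $\dedge{vv}$; setting those to zero makes $O^{uv}_1 O^{uv}_{-1} \neq 0$, contradicting the paper's (correct, and load-bearing) assertion that $\Oc^{uv}$ is a complete set of orthogonal projectors. The intended reading is $w \in V\setminus\set{u,v}$, under which $O^{uv}_0$ fixes $\ket{\dedge{uv}}$ and $\ket{\dedge{vu}}$ --- precisely what is needed for the post-projection state to match $S_t$. (The paper's own inductive step makes this explicit, writing $S_{t+1} = \set{\dedge{xy},\dedge{yx}} \cup S_t\setminus W$ and retaining the fresh pair.)

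Your two errors are mutually consistent --- your version of the post-measurement state and your version of $S_t$ both drop the same two elements --- so the argument reads coherently, but it establishes an invariant for a smaller set than the lemma's $S_t$. To repair it, fix both sides: observe that the measurement operator does not act on $\ket{\dedge{uv}}, \ket{\dedge{vu}}$, and note that $\dedge{uv}, \dedge{vu}$ enter $S_t$ vacuously via $i = t$. Everything else in your proposal (base case, the unitary step, the $f(t)=0$ case, the dismissal of the $\pm 1$ outcomes) matches the paper and is fine.
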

\begin{proof}
We proceed by induction. For $t = 0$, \[
\Sigma_t = \frac{1}{\sqrt{m}}\sum_{i=1}^{2m}\ket{i}
\]
and so the result holds. Now, for any $t \in \brac{m-1}$, suppose that the result
holds after $t$ updates.  Let $xy$ (with $x<y$) be the $(t + 1)^\text{th}$
update. Then after applying the unitary $U_{xy}^{t+1}$, \[
\Sigma_{t+1} = \frac{\sum_{i=2t+3}^{2m} \ket{i} + \sum_{\dedge{uv} \in S_t \cup
\set {\dedge{xy}, \dedge{yx}}} \ket{\dedge{uv}}}{\sqrt{2m - 2t + |S_t|}}
\]
as $U^{t+1}_{xy}$ swapped the basis vectors $\ket{(2t+1)}$, $\ket{(2t+2)}$
for $\ket{\dedge{xy}}$, $\ket{\dedge{yx}}$.

So if $f(t+1) = 0$, the result continues to hold for $t+1$, as in this case \[
S_{t+1} = S_t \cup \set{\dedge{xy}, \dedge{yx}}
\]
and so $2m - 2t + \abs{S_t} = 2m - 2(t+1) + \abs{S_{t+1}}$. Now suppose $f(t+1)
= 1$. Let \[
W = \set{\dedge{wz} \in S_t : z \in \set{x,y}}\text{.}
\] Then \[
S_{t+1} = \set{\dedge{xy}, \dedge{yx}} \cup S_t \setminus W
\] and 
\begin{align*}
U^{t+1}_{xy}\Sigma_t &= \frac{\sum_{i=2t+3}^{2m} \ket{i} + \sum_{\dedge{uv} \in
S_t \cup \set {\dedge{xy}, \dedge{yx}}} \ket{\dedge{uv}}}{\sqrt{2m - 2t +
|S_t|}}\\
&= \frac{\sum_{i=2t+3}^{2m} \ket{i} + \sum_{\dedge{uv} \in S_{t+1}}
\ket{\dedge{uv}}}{\sqrt{2m - 2t + \abs{S_{t}}}} +
\frac{\sum_{\dedge{uv} \in W} \ket{\dedge{uv}}}{\sqrt{2m - 2t + |S_t|}}\text{.}
\end{align*}
If $\questimator$ does not return after this update, that means the measurement
operation returned $0$. Therefore, after the measurement, 
\begin{align*}
\Sigma_{t+1} &=
\frac{O^{xy}_0U^{t+1}_{xy}\Sigma_t}{\norm*{O^{xy}_0U^{t+1}_{xy}\Sigma_t}_2}\\
&= \frac{\sum_{i=2t+3}^{2m} \ket{i} + \sum_{\dedge{uv} \in S_{t+1}}
\ket{\dedge{uv}}}{\norm{\sum_{i=2t+3}^{2m} \ket{i} + \sum_{\dedge{uv} \in S_{t+1}}
\ket{\dedge{uv}}}_2}\\
&= \frac{\sum_{i=2t+3}^{2m} \ket{i} + \sum_{\dedge{uv} \in S_{t+1}}
\ket{\dedge{uv}}}{\sqrt{2m - 2(t+1) + \abs{S_{t+1}}}}
\end{align*}
and so the result holds for all $t \in \brac{m}$.
\end{proof}

\begin{lemma}
\label{lm:tlessex}
\[
\E{\bb} = \frac{\Tlk}{km}
\]
\end{lemma}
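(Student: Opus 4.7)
The plan is to write $\mathbb{E}[\bb] = \sum_{t=1}^{m}(p^+_t - p^-_t)$, where $p^\pm_t$ is the probability that $\questimator$ returns with $\bb = \pm 1$ at step $t$, and to evaluate each difference by combining Lemma~\ref{lm:stateinv} with a direct calculation of the measurement probabilities. I would work with the \emph{unnormalized} state $\Phi_t$ that evolves by $\Phi_t = U^t_{\sigma_t}\Phi_{t-1}$ when $f(t)=0$ and by $\Phi_t = O^{\sigma_t}_0 U^t_{\sigma_t}\Phi_{t-1}$ when $f(t)=1$ (the pre-normalization state on the ``outcome $0$'' branch). Since $U^t_{\sigma_t}$ is a permutation of basis vectors and $O^{\sigma_t}_0$ merely deletes some basis vectors, a short induction shows that every surviving basis vector of $\Phi_t$ carries amplitude exactly $1/\sqrt{2m}$, and that $\|\Phi_{t-1}\|^2$ equals the probability of reaching step $t$. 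Combining these yields the clean identity $\Pr[\bb = b,\text{ return at step } t \mid f] = \mathbbm{1}[f(t)=1] \cdot \|O^{\sigma_t}_b U^t_{\sigma_t}\Phi_{t-1}\|^2$ for $b \in \{\pm 1\}$, with no residual ``reach-step-$t$'' prefactor.

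Writing $\sigma_t = uv$, expanding the definitions of $O^{uv}_{+1}$ and $O^{uv}_{-1}$ gives
\[
\|O^{uv}_{+1} U^t_{uv}\Phi_{t-1}\|^2 - \|O^{uv}_{-1} U^t_{uv}\Phi_{t-1}\|^2 = 2\sum_{w \in V} \langle\dedge{wu}|U^t_{uv}\Phi_{t-1}\rangle \langle\dedge{wv}|U^t_{uv}\Phi_{t-1}\rangle,
\]
which by the amplitude property equals $N_t/m$, where $N_t$ counts the vertices $w \notin \{u,v\}$ with both $\dedge{wu}, \dedge{wv} \in S_{t-1}$. Each such $w$ identifies a triangle $\{u,v,w\}$ whose last edge is $\sigma_t$ and whose ``first vertex'' (in the preliminaries' sense) is $w$.

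Finally, for each such triangle I would take WLOG $wu \bef wv$ (writing $i_{wu}, i_{wv}$ for their arrival times, so $i_{wu} < i_{wv} < t$) and observe that the event $\{\dedge{wu}, \dedge{wv}\} \subseteq S_{t-1}$ requires $f(j)=0$ at every step $j$ satisfying either ($u \in \sigma_j$ and $i_{wu} < j < t$) or ($v \in \sigma_j$ and $i_{wv} < j < t$). These two index sets are disjoint---an overlap would force $\sigma_j = uv$ for some $j < t$---with sizes $\degb{wuv}$ and $\degb{wvu}$ respectively; since $t$ lies in neither, the full independence of $f$ yields $\Pr[f(t)=1,\ \dedge{wu}, \dedge{wv} \in S_{t-1}] = (1/k)(1-1/k)^{\degb{wuv}+\degb{wvu}}$. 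Reindexing the double sum over $t$ and $w$ as a sum over triangles (each appearing once, indexed by its last edge as $\sigma_t$ and its first vertex as $w$), and matching the exponent with the one in $\tlk_{u'v'w'}$ under the relabeling $(u',v',w') = (w,u,v)$, gives $\mathbb{E}[\bb] = \Tlk/(km)$.

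I expect the main obstacle to be this last bookkeeping step: verifying disjointness of the two ``killing'' index sets so the probabilities multiply cleanly, and correctly matching the resulting exponent $\degb{wuv} + \degb{wvu}$ to the one defining $\tlk_{u'v'w'}$ via the relabeling. The quantum part---the algebraic identity reducing the operator difference to $2\sum_w a_w b_w$---is direct once the uniform-amplitude invariant is in hand.
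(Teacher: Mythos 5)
Your proposal is correct, and it takes a cleaner route than the paper to the same endpoint. The paper's proof conditions on $f$, invokes Lemma~\ref{lm:stateinv} for the normalized state $\Sigma_{t-1}$, reads off the measurement distribution at each step $t$ with $f(t)=1$, and then organizes the bookkeeping via a recurrence $H^f_i$ for the conditional expectation of $\bb$ given survival to time $t_i$, which it closes by a reverse induction exploiting the telescoping identity $\abs{S_{t_{i+1}-1}} = \abs{S_{t_i}} + 2(t_{i+1}-t_i-1)$. You instead chain the projective measurements on the \emph{unnormalized} state $\Phi_t$ and observe that surviving amplitudes remain $1/\sqrt{2m}$, so $\Pr[\bb=b,\text{ return at }t\mid f]=\mathbbm{1}[f(t)=1]\,\norm{O^{\sigma_t}_b U^t_{\sigma_t}\Phi_{t-1}}^2$ with no survival prefactor---this is precisely the content of the paper's telescoping induction, absorbed into a single standard fact about sequences of projective measurements. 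Your operator identity $O^{uv}_{+1}-O^{uv}_{-1}=\sum_w\paren*{\ket{\dedge{wu}}\bra{\dedge{wv}}+\ket{\dedge{wv}}\bra{\dedge{wu}}}$ then gives the per-step contribution $N_t/m$ directly, where the paper instead reads $\abs{W_t^+}$ off an explicit decomposition of $U^t_{xy}\Sigma_{t-1}$ into $W^+_t$ and $W^-_t$ parts. The probabilistic tail of your argument (disjointness of the two killing index sets because an overlap would force $\sigma_j=uv$ for some $j<t$, and the reindexing from $(t,w)$ pairs to nonzero $\tlk$ entries) matches the paper's computation of $\E{\abs{W^+_t}}$ essentially verbatim. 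The tradeoff: the paper's recurrence is self-contained and avoids any appeal to the general chaining rule, while your version is shorter and makes the uniform-amplitude invariant---the real reason the estimator is unbiased---explicit rather than implicit in a telescoping product.
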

\begin{proof}
For $\bb$ to take a non-zero value, it must be returned from the measurement
step at some $t$ such that $f(t) = 1$, and the algorithm must not return before
time $t$. Condition on the value of $f$. For any $t$ such that $f(t) = 1$, let
$xy = \sigma_t$, so $S_t = \set{\dedge{xy},\dedge{yx}} \cup S_{t-1} \setminus
W_t$, where $W_t = \set{\dedge{wz} \in S_{t-1} : z \in \set{x,y}}$.

Suppose that the algorithm has not yet returned a value. By
Lemma~\ref{lm:stateinv},  \[
\Sigma_{t-1} = \frac{\sum_{i=2t-1}^{2m} \ket{i} + \sum_{\dedge{uv} \in S_{t-1}}
\ket{\dedge{uv}}}{\sqrt{2m - 2(t-1) + |S_{t-1}|}}
\]
and so
\begin{align*}
U_{xy}^t\Sigma_{t-1} &= \frac{\sum_{i=2t+1}^{2m} \ket{i} + \sum_{\dedge{uv}
\in S_t} \ket{\dedge{uv}}}{\sqrt{2m - 2(t-1) + \abs{S_{t-1}}}} +
\frac{\sum_{\dedge{uv} \in W_t} \ket{\dedge{uv}}}{\sqrt{2m - 2(t-1) + |S_{t-1}|}}\\
&= \frac{\sum_{i=2t+1}^{2m} \ket{i} + \sum_{\dedge{uv}
\in S_t} \ket{\dedge{uv}}}{\sqrt{2m - 2(t-1) + \abs{S_{t-1}}}} + \frac{\sum_{w \in W_t^+}
\paren*{\ket{\dedge{wx}} + \ket{\dedge{wy}}}}{\sqrt{2m - 2(t-1) + |S_{t-1}|}} \\
&\phantom{=} + \frac{\sum_{w \in W_t^-} \paren*{\ket{\dedge{wx}} + \ket{\dedge{wy}}}}{2\sqrt{2m - 2(t-1) + |S_{t-1}|}} +
\frac{\sum_{w \in W_t^-}\paren*{\ket{\dedge{wx}} -
\ket{\dedge{wy}}}}{2\sqrt{2m - 2(t-1) + |S_{t-1}|}}(-1)^{\Ibb(\dedge{wy} \in W_t)}
\end{align*}
where
\begin{align*}
W_t^+ &= \set{w \in V : \abs{\set{\dedge{wx}, \dedge{wy}} \cap W_t} = 2}\\
W_t^{-} &= \set{w \in V : \abs{\set{\dedge{wx}, \dedge{wy}} \cap W_t} = 1}\text{.}
\end{align*}

Therefore, conditioned on $f$ and the algorithm not having returned yet, after
the measurement (breaking the 1 case into two for clarity) \[
\bb = \begin{cases}
1 & \mbox{with probability $\frac{2\abs{W_t^+}}{2m - 2(t-1) + \abs{S_{t-1}}}$}\\
1 & \mbox{with probability $\frac{\abs{W_t^-}}{2(2m - 2(t-1) + \abs{S_{t-1}})}$}\\
-1 & \mbox{with probability $\frac{\abs{W_t^-}}{2(2m - 2(t-1) + \abs{S_{t-1}})}$}\\
0 & \mbox{with probability $\frac{2m - 2t + \abs{S_t}}{2m - 2(t-1) + \abs{S_{t-1}}}$.}
\end{cases}
\]
Now, let $t_i$ be the $i^\text{th}$ $t$ such that $f(t) = 1$, and let $l =
\abs{\set{t \in \brac{m} : f(t) = 1}}$. For each $i \in \brac{l}$, let $H^f_i$
be the expected value of $\bb$ when it is returned, conditioned on $f$ and on
the algorithm not returning at a time step before $t_i$. Then, by the above we
have \[
H^f_i = \frac{2\abs{W_{t_i}^+}}{2m - 2(t_i - 1) + \abs{S_{t_i-1}}} + \frac{2m -
2t_i + \abs{S_{t_i}}}{2m - 2(t_i - 1) + \abs{S_{t_i-1}}}H^f_{i+1}
\]
with $H^f_{l + 1}$ defined to be $0$. We will prove by reverse induction on $i$
that \[
H^f_i = \frac{2\sum_{j=i}^l \abs{W_{t_j}^+}}{2m - 2(t_{i} - 1) + \abs{S_{t_i-1}}} 
\]
for all $i \in \brac{l + 1}$. We have $H^f_{l + 1} = 0$ by definition. Now, for
any $i < l + 1$, suppose \[
H^f_{i+1} = \frac{2\sum_{j=i+1}^l \abs{W_{t_j}^+}}{2m - 2(t_{i+1}-1) +
\abs{S_{t_{i+1}-1}}}\text{.}
\]
Then
\begin{align*}
H^f_i &= \frac{2\abs{W_{t_i}^+}}{2m - 2(t_i - 1) + \abs{S_{t_i-1}}} + \frac{2m -
2{t_i} + \abs{S_{t_i}}}{2m - 2(t_i - 1) + \abs{S_{t_i -
1}}}\frac{2\sum_{j=i+1}^l \abs{W_{t_j}^+}}{2m - 2(t_{i+1} - 1) +
\abs{S_{t_{i+1} - 1}}}\\
&= \frac{2\sum_{j=i}^l \abs{W_{t_j}^+}}{2m - 2(t_{i} - 1) + \abs{S_{t_i - 1}}}
\end{align*}
as $\abs{S_{t_{i+1} - 1}} = \abs{S_{t_i}} + 2(t_{i+1} - t_{i} - 1)$, as $f(s) = 0$
when $t_i < s < t_{i+1}$, and so the set grows by two edges at each such $s$.

So we now have
\begin{align*}
\E{\bb|f} &= H_1^f\\
&= \frac{1}{m}\sum_{j=1}^l \abs{W_{t_j}^+}\\
&= \frac{1}{m}\sum_{t=1}^m \abs{W^+_t}
\end{align*}
by defining $W_t^+$ to be the empty set when $f(t) = 0$. And so \[
\E{\bb} = \frac{1}{m}\sum_{t=1}^m \E{\abs{W^+_t}}
\]

Now, recall that, when $f(t) = 1$, $W_{t}^+$ consists of the vertices $w$ in
$V$ such that, if $uv$ is the edge that arrives at time $t$, $wu$ and $wv$
arrive at times $s_1,s_2 < t$, and there is no time $s_1' \in (s_1,t) \cap
\brac{m}$ such that $f(s_1') = 1$ and $\sigma_{s'}$ is incident to $u$, nor
time $s_2' \in (s_2, t) \cap \brac{m}$ such that $f(s_2') = 1$ and
$\sigma_{s_2'}$ is incident to $v$.

If both $wu$ and $wv$ arrived before $uv$, the probability that this happens
conditioned on $f(t)=1$ is $(1 - 1/k)^{\degb{uvw} + \degb{uwv}} = \tlk_{wuv} +
\tlk_{wvu}$.

So \[
\E{\abs{W^+_t} \middle| f(t) = 1} = \sum_{w \in V}\paren*{\tlk_{wuv} +
\tlk_{wvu}}
\]
where $uv = \sigma_t$ and so \[
\E{\abs{W^+_t}} = \frac{1}{k} \sum_{w \in V}\paren*{\tlk_{wuv} +
\tlk_{wvu}}\text{.}
\]
As $\tlk_{xyz}$ is 0 whenever $yz$ is not an edge that appears in the stream,
this gives us
\begin{align*}
\E{\bb} &= \frac{1}{km}\sum_{\substack{t=1\\ uv \coloneqq \sigma_t}}^m \sum_{w
\in
V}\paren*{\tlk_{wuv} +
\tlk_{wvu}}\\
&= \frac{1}{km}\sum_{(u,v,w) \in V^3}\tlk_{uvw}\\
&= \frac{\Tlk}{km}
\end{align*}
completing the proof.
\end{proof}

\begin{lemma}
\label{lm:tlessvar}
\[
\Var{\bb} \le 1
\]
\end{lemma}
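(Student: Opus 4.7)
The plan is to observe that this lemma is essentially immediate from the fact that $\bb$ is a bounded random variable. By inspection of \textproc{QuantumEstimator}, the value $\bb$ returned is either $0$ (if the algorithm runs to the end of the stream without a nonzero measurement outcome), or $+1$, or $-1$ (the value reported by one of the $\Oc^{uv}$ measurements, which by construction has outcome in $\set{-1,0,1}$). Hence $\bb \in \set{-1,0,1}$ with probability one, so $\bb^2 \le 1$ deterministically.

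From this, the proof is a one-liner: $\Var{\bb} = \E{\bb^2} - (\E{\bb})^2 \le \E{\bb^2} \le 1$. The only thing one might want to do more carefully is to justify (either by appeal to the previous lemma or by direct inspection of the algorithm) that the sample space for $\bb$ is indeed a subset of $\set{-1,0,1}$; this follows because the operators $O^{uv}_1, O^{uv}_{-1}, O^{uv}_0$ form a complete set of orthogonal projectors, so any measurement in step with $f(t)=1$ yields exactly one of the three outcomes, and $\bb=0$ is assigned explicitly in the fall-through branch at the end of the procedure.

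There is no main obstacle here; the lemma is stated as a variance bound, but it does not rely on any of the structural work done in Lemma~\ref{lm:tlessex}. I would simply write one short paragraph noting that $\bb$ is supported on $\set{-1,0,1}$ and concluding $\Var{\bb}\le\E{\bb^2}\le 1$.
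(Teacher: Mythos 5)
Your proof is correct and matches the paper's, which likewise notes simply that $|\bb| \le 1$ and concludes $\Var{\bb} \le 1$. Your slightly more explicit spelling out of $\Var{\bb} = \E{\bb^2} - (\E{\bb})^2 \le \E{\bb^2} \le 1$ is the same one-line argument.
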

\begin{proof}
This follows immediately from the fact that $|\bb|$ is at most $1$.
\end{proof}

\begin{lemma}
\label{lm:questimate}
For any $\varepsilon, \delta \in (0,1\rbrack$, there is a quantum streaming
algorithm, using \[
O\paren*{\paren*{\frac{km}{T}}^2\log n \frac{1}{\varepsilon^2}\log
\frac{1}{\delta}}
\]
quantum and classical bits, that estimates $\Tlk$ to $\varepsilon T$
precision with probability $1-\delta$.
\end{lemma}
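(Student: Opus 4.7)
The plan is to turn $\questimator(k)$ into the desired $(\varepsilon T, \delta)$-estimator by a standard median-of-means amplification. By Lemma~\ref{lm:tlessex}, the scaled output $km \cdot \bb$ of a single call is an unbiased estimator of $\Tlk$, and by Lemma~\ref{lm:tlessvar} it satisfies $\Var{km \cdot \bb} = (km)^2 \Var{\bb} \le (km)^2$.

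First, I would run $N = \Theta\paren*{(km/(\varepsilon T))^2}$ independent copies of $\questimator(k)$ in parallel on the stream, each with its own register of $\beta = 2\ceil{\log n}+1$ qubits and its own independent hash function $f$. Let $\wh{Y}$ denote the average of the $N$ outputs scaled by $km$. Linearity of expectation gives $\E{\wh{Y}} = \Tlk$, and independence gives $\Var{\wh{Y}} \le (km)^2/N$. Tuning the constant in $N$ and applying Chebyshev's inequality then yields $\Pb{\abs{\wh{Y} - \Tlk} > \varepsilon T} \le 1/3$. To boost success to $1-\delta$, I would run $\Theta(\log(1/\delta))$ independent copies of this averaged estimator and return the median; a standard Chernoff bound on the number of ``bad'' copies (those deviating by more than $\varepsilon T$) gives the desired failure probability.

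For the space accounting, each of the $N \cdot \Theta(\log(1/\delta))$ instances uses $\beta = O(\log n)$ qubits together with $O(\log n)$ classical bits for the step counter $t$ and the measurement outcome $\bb$. No instance needs to store $f$, since its value at step $t$ is sampled on the fly and discarded, as remarked after the definition of $f$. Multiplying out gives the claimed $O\paren*{(km/T)^2 \log n \cdot (1/\varepsilon^2) \log(1/\delta)}$ total quantum and classical bits.

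I do not expect any substantive obstacle here: the moment bounds from Lemmas~\ref{lm:tlessex} and~\ref{lm:tlessvar} already do all the real work, and everything else is the usual variance-reduction-plus-median pattern. The only point one has to check with a little care is that the parallel copies can genuinely be made independent on a single adversarially ordered stream—which they can, because they share only the read-once stream and each copy has its own fresh hash function and its own disjoint quantum register—and that generating the independent hash values on the fly does not inflate the per-instance space bound, which it does not.
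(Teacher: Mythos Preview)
Your proposal is correct and matches the paper's own proof essentially line for line: average $\Theta\paren*{(km/(\varepsilon T))^2}$ independent copies of $\questimator(k)$ scaled by $km$, apply Chebyshev using Lemmas~\ref{lm:tlessex} and~\ref{lm:tlessvar}, then take the median of $\Theta(\log(1/\delta))$ repetitions, with the space bound following from the $O(\log n)$ per-instance cost. The extra remarks you make about independence of the parallel copies and on-the-fly generation of $f$ are fine and slightly more explicit than the paper, but the argument is the same.
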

\begin{proof}
By taking the average of $\bT{\paren*{km/T\varepsilon}^2}$ copies of
$\questimator(k)$ and multiplying by $km$, we obtain an estimator with
expectation $\Tlk$ and variance at most $\varepsilon^2 T^2/4$. So by
Chebyshev's inequality the estimator will be within $\varepsilon T$ of $\Tlk$
with probability $3/4$. We can then repeat this $O(\log \frac{1}{\delta})$
times and take the median of our estimators to estimate $\Tlk$ to
$T\varepsilon$ precision with probability $1 - \delta$.

Each copy of $\questimator$ uses $2\ceil{\log n} + 1$ qubits and $\bO{\log n}$
classical bits, and so the result follows.
\end{proof}

\section{Classical Estimator}
For this algorithm we will use a hash function $g : V \rightarrow \bool$ such that \[
\E{g(v)} = 1/\sqrt{km}
\]
for each $v \in V$. We \emph{will} need to store this function, but instead of
making it fully independent we will make it pairwise independent, so this will
only require $O(\log n)$ bits.
\begin{algorithm}[H]
\caption{Classical estimator for $\Tgk$.}
\begin{algorithmic}[1]
\Procedure{ClassicalEstimator}{$k$}
\State $S \gets \emptyset$ 
\State $D \gets $ the empty map from $S$ to $\mathbb{N}$. 
\State $\Xb = 0$
\For{each update $uv$}
\For{$w \in V$}
\If{$\dedge{wu},\dedge{wv} \in S$}
\State $\Xb \gets \Xb + 1 - (1 - 1/k)^{D\brac*{\dedge{wu}} +
D\brac*{\dedge{wv}}}$ \Comment{Add
$\tgk_{wuv} + \tgk_{wvu}$ to $\Xb$.}
\EndIf
\EndFor
\For{$\dedge{xy} \in S$}
\If{$y \in \set{u,v}$}
\State $D\brac{\dedge{xy}} \gets D\brac{\dedge{xy}} + 1$
\EndIf
\EndFor
\If{$g(u) = 1$} 
\State Add $\dedge{uv}$ to $S$ with probability $\sqrt{k/m}$.
\State If $\dedge{uv}$ was added, set $D\brac{\dedge{uv}}=0$.
\EndIf
\If{$g(v) = 1$} 
\State Add $\dedge{vu}$ to $S$ with probability $\sqrt{k/m}$.
\State If $\dedge{vu}$ was added, set $D\brac{\dedge{vu}}=0$.
\EndIf
\EndFor
\State \return $\Xb$.
\EndProcedure
\end{algorithmic}
\end{algorithm}

\begin{lemma}
\label{lm:classtimatorspace}
The expected space usage of $\clastimator$ is $O(\log n)$ bits.
\end{lemma}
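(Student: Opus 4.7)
The plan is to decompose the state of \textproc{ClassicalEstimator} into its four pieces and bound the expected space contributed by each. The pieces are: (i) the pairwise-independent hash function $g$, (ii) the sampled edge set $S$, (iii) the degree-counter map $D$, and (iv) the running accumulator $\Xb$. Pieces (i) and (iv) will be bounded deterministically, while for (ii) and (iii) the argument is an application of linearity of expectation using the sampling rates.

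First I would record that $g$ is pairwise independent from $V$ to $\bool$, and can be stored with $O(\log n)$ bits using a standard construction (e.g.\ based on a linear map over $\mathbb{F}_2^{\lceil \log n \rceil}$), so this contribution is $O(\log n)$ deterministically. Next I would bound $\E{|S|}$. A given directed edge $\dedge{uv}$ can be added to $S$ only in the step when the undirected edge $uv$ is processed, and only if $g(u) = 1$ (which happens with probability $1/\sqrt{km}$ by the stated marginal of $g$) and the independent Bernoulli with bias $\sqrt{k/m}$ succeeds. The two events are independent, so the probability $\dedge{uv}$ is ever added is exactly
\[
\frac{1}{\sqrt{km}} \cdot \sqrt{\frac{k}{m}} = \frac{1}{m}.
\]
Since each of the $m$ updates contributes at most two candidate directed edges, linearity of expectation gives $\E{|S|} \le 2m \cdot (1/m) = 2$.

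Now I would use this to bound pieces (ii) and (iii) together. Every element of $S$ is a directed edge, representable in $2\lceil \log n \rceil$ bits. Every key of $D$ lies in $S$, and every value of $D$ is a nonnegative integer bounded by $m \le \binom{n}{2}$, hence representable in $O(\log n)$ bits. Since $|D| = |S|$ throughout execution, the total expected space for $S$ and $D$ combined is $\E{|S|} \cdot O(\log n) = O(\log n)$. For $\Xb$, each increment is in $[0,1)$ and the number of increments over the stream is bounded by the number of triangles in $G$, itself $O(n^3)$, so $\Xb \le O(n^3)$ and can be stored in $O(\log n)$ bits (with $O(\log n)$ bits of fractional precision, which is enough not to distort the analysis elsewhere).

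The main subtlety is just making sure that the sampling probabilities for $g$ and for the subsequent Bernoulli are genuinely independent of each other, so that their product gives the per-edge inclusion probability $1/m$; once this is established, the rest is a routine sum of $O(\log n)$ contributions. Summing the four bounds yields the claimed $O(\log n)$ expected space.
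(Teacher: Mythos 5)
Your proof is correct and follows essentially the same approach as the paper: the key step in both is computing the per-edge inclusion probability $\frac{1}{\sqrt{km}}\cdot\sqrt{k/m}=\frac{1}{m}$, yielding $O(1)$ expected sampled edges with an $O(\log n)$-bit edge and counter for each. You are slightly more thorough than the paper in explicitly accounting for the $O(\log n)$ bits used by $g$ and by the accumulator $\Xb$ (including the finite-precision caveat), which the paper leaves implicit.
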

\begin{proof}
For each edge in $G$, $\clastimator$ will keep it with probability
$\frac{1}{\sqrt{km}} \times \sqrt{k/m} = \frac{1}{m}$ for each of its
endpoints. So the algorithm keeps $O(1)$ edges in expectation, along with a
counter (of size at most $m$) for each edge. The edge and the counter can each
be stored in $O(\log n)$ bits, and so the result follows.
\end{proof}

\begin{lemma}
\[
\E{\Xb} = \frac{\Tgk\sqrt{k}}{m^{3/2}}
\]
\end{lemma}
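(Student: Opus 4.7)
The plan is to use linearity of expectation, reducing to analyzing the contribution of a single triangle. Fix any triangle $\{a,b,c\}$ in $G$ and label its vertices so that $ab \bef ac \bef bc$; this is the unique labeling with $a$ as ``first vertex'' and $bc$ as ``last edge'', matching the convention used in the definition of $\tgk_{abc}$. I claim that this triangle contributes, in expectation, exactly $\frac{\sqrt{k}}{m^{3/2}}\tgk_{abc}$ to $\Xb$.

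First I would argue that the triangle can only add to $\Xb$ at the single time step when $bc$ arrives, and only through the iteration $w=a$ of the inner \textbf{for} loop. At the earlier arrivals $ab$ and $ac$, one of the two required directed edges (say $\dedge{cb}$ or $\dedge{bc}$) corresponds to an edge of the graph that has not yet been seen in the stream, so it cannot be in $S$; so no ``false'' contributions from $\{a,b,c\}$ occur before $bc$ arrives.

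Next I would compute the probability that the relevant guard $\dedge{au},\dedge{av} \in S$ is satisfied at the arrival of $bc$. Since $\dedge{ab}$ can only be added at the time step of update $ab$, and once added it is never removed, $\dedge{ab} \in S$ at the time of $bc$ is equivalent to $g(a)=1$ together with the independent Bernoulli$(\sqrt{k/m})$ sampling decision at that step succeeding; likewise for $\dedge{ac}$. The key point is that both require the \emph{same} coin $g(a)=1$ (which occurs with probability $1/\sqrt{km}$), while the two Bernoulli sampling decisions are independent, giving joint probability $\tfrac{1}{\sqrt{km}}\cdot\tfrac{k}{m}=\tfrac{\sqrt{k}}{m^{3/2}}$. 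Note that this computation does not depend on the stream ordering, on any other sampling decisions, or on other $g$-values (pairwise independence of $g$ is not even needed for a single triangle; it will matter only for bounding the variance elsewhere).

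I then need to check that, conditioned on this event, the value added to $\Xb$ is exactly $\tgk_{abc}$. Since $D[\dedge{ab}]$ is initialized to $0$ when $\dedge{ab}$ is inserted and incremented precisely once for each subsequent update incident to $b$ (and the increment step for the update $bc$ itself happens \emph{after} the contribution to $\Xb$ is recorded), we have $D[\dedge{ab}]=\degb{abc}$ at that moment; symmetrically $D[\dedge{ac}]=\degb{acb}$. So the contribution is $1-(1-1/k)^{\degb{abc}+\degb{acb}}=\tgk_{abc}$. Summing over all triangles via linearity, and using that $\sum_{(u,v,w) \in V^3}\tgk_{uvw}=\Tgk$ counts each triangle exactly once in its canonical orientation, yields $\E{\Xb}=\frac{\sqrt{k}}{m^{3/2}}\Tgk$. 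The main care is just the bookkeeping in the previous paragraph: making sure the operation order within the \textbf{for} loop causes $D$ to track exactly $\degb{}$ (rather than being off by one) and that each triangle is counted in a single direction.
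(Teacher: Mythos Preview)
Your proposal is correct and follows essentially the same approach as the paper's proof: linearity of expectation, with each triangle (in its canonical orientation $ab \bef ac \bef bc$) contributing $\tgk_{abc}$ exactly when $g(a)=1$ and both $\dedge{ab},\dedge{ac}$ pass their independent $\sqrt{k/m}$ coin flips, which happens with probability $\tfrac{1}{\sqrt{km}}\cdot\tfrac{k}{m}=\tfrac{\sqrt{k}}{m^{3/2}}$. The paper's version is a single sentence omitting the bookkeeping, whereas you explicitly verify that no spurious contributions occur at earlier edges and that the counters $D[\cdot]$ record exactly $\degb{\cdot}$ at the moment of reading---this extra care is appropriate and does not deviate from the paper's argument.
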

\begin{proof}
For any triangle $uvw$ such that $uv \bef uw \bef vw$, $\tgk$ will be added to
$\Xb$ iff $g(u) = 1$ (which happens with probability $1/\sqrt{km}$) and both
$uv$ and $uw$ are then kept by the algorithm (which happens with probability
$(\sqrt{k/m})^2$ conditioned on $g(u) = 1$).
\end{proof}

\begin{lemma}
\[
\Var{\Xb} \le 4\frac{\Tgk\sqrt{k}}{m^{3/2}}\Delta_E
\]
\end{lemma}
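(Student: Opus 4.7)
The plan is to expand $\Xb = \sum_T \tgk_T Y_T$, where the sum is over ordered triples $(u,v,w)\in V^3$ with $\tgk_{uvw} \neq 0$ (one per triangle) and $Y_T$ is the indicator that both $\dedge{w_T u_T}, \dedge{w_T v_T}$ lie in $S$ at the moment the third edge of $T$ arrives. Writing $Y_T = A_{w_T}\,B_{\dedge{w_T u_T}}\,B_{\dedge{w_T v_T}}$, where $A_w = \Ibb(g(w)=1)$ and $B_{\dedge{xy}}$ is the bias-$p=\sqrt{k/m}$ coin flipped when $xy$ arrives (conditional on $g(x)=1$), lets me split
\[ \Var{\Xb} = \sum_T \tgk_T^2 \Var{Y_T} + \sum_{T_1 \neq T_2} \tgk_{T_1}\tgk_{T_2}\,\mathrm{Cov}(Y_{T_1}, Y_{T_2}). \]
Because $g$ is pairwise independent and distinct edges' coins are independent, $\mathrm{Cov}(Y_{T_1},Y_{T_2}) = 0$ whenever $w_{T_1} \neq w_{T_2}$, so only same-first-vertex pairs contribute. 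I will split those into three subcases by the number of shared directed first-edges: (a) $T_1=T_2$; (b) exactly one first-edge shared; (c) no first-edges shared. A direct calculation with $q = 1/\sqrt{km}$ gives $\Var{Y_T} \le qp^2 = \sqrt{k}/m^{3/2}$, $\mathrm{Cov} \le qp^3 = k/m^2$ in case (b), and $\mathrm{Cov} \le qp^4 = k^{3/2}/m^{5/2}$ in case (c).

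Cases (a) and (b) are routine counting. Case (a) contributes at most $\Tgk\sqrt{k}/m^{3/2}$ via $\tgk_T \le 1$. For case (b), each $T_1$ has at most $2(\Delta_E-1)$ partners, since each of its two first-edges lies in at most $\Delta_E$ triangles, giving total contribution at most $(2\Delta_E-2)\Tgk\cdot k/m^2 \le (2\Delta_E-2)\Tgk\sqrt{k}/m^{3/2}$ using $k\le m$.

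The main obstacle is case (c), where the contribution takes the form $(k^{3/2}/m^{5/2})\sum_w \Tgk_w^2$ with $\Tgk_w = \sum_{T:w_T=w}\tgk_T$. The plan is to establish the uniform bound $\Tgk_w \le 2m\Delta_E/k$ by combining the elementary inequality $\tgk_T \le \deg_T/k$ (from $1-(1-1/k)^n\le n/k$, where $\deg_T = \degb{w_T u_T v_T} + \degb{w_T v_T u_T}$) with a per-(edge, endpoint) charging argument. Concretely: for each edge $e$ in the stream and each endpoint $x$ of $e$, any triangle $T = \{w,x,v\}$ with $w_T=w$ for which $e$ contributes to $\degb{wxv}$ must contain the edge $wx$, and so there are at most $\Delta_E$ such triangles. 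Summing over the $2m$ (edge, endpoint) pairs yields $\sum_{T:w_T=w}\deg_T \le 2m\Delta_E$, hence $\Tgk_w \le 2m\Delta_E/k$. Consequently $\sum_w \Tgk_w^2 \le (\max_w \Tgk_w)\cdot\Tgk \le 2m\Delta_E\Tgk/k$, making the case (c) contribution at most $2\Delta_E\Tgk\sqrt{k}/m^{3/2}$.

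Adding the three pieces and using $\Delta_E \ge 1$ gives $\Var{\Xb} \le \bigl(1 + (2\Delta_E-2) + 2\Delta_E\bigr)\Tgk\sqrt{k}/m^{3/2} = (4\Delta_E-1)\Tgk\sqrt{k}/m^{3/2} \le 4\frac{\Tgk\sqrt{k}}{m^{3/2}}\Delta_E$, as required.
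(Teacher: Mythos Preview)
Your proof is correct and follows essentially the same structure as the paper's: both decompose $\Var{\Xb}$ by first vertex (using pairwise independence of $g$ to kill cross-terms), split into the same three cases (diagonal, one shared first-edge, no shared first-edge), and obtain the same per-case bounds $\Tgk\sqrt{k}/m^{3/2}$, $\bO{\Delta_E}\Tgk k/m^2$, and $2\Delta_E\Tgk\sqrt{k}/m^{3/2}$. The only minor difference is in the third case, where the paper bounds $\tgk_{uvw} \le (d_v + d_w)/k$ and uses a degree-ordering trick to sum to $2m\Delta_E/k$, while you use the tighter $\tgk_T \le \deg_T/k$ with a direct (edge, endpoint) charging argument; both routes establish the same key inequality $\Tgk_w \le 2m\Delta_E/k$.
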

\begin{proof}
For each $(u,v,w) \in V^3$, let $\Xb_{uvw}$ be the contribution to $\Xb$ from
(possibly) adding $\tgk_{uvw}$. For each $u \in V$, let $\Xb_u = \sum_{(v,w)
\in V^2} X_{uvw}$. Then $\Xb = \sum_u \Xb_u$ and the $\Xb_u$ are all
independent, as each depends on $g(u)$ and the independent edge addition
events. So $\Var{\Xb} = \sum_{u \in V} \Var{\Xb_u}$.

Now, for each $u \in V$,
\begin{align*}
\Var{\Xb_u} &\le \E{\Xb_u^2}\\
&= \E{\paren*{\sum_{(v,w) \in V^2} \Xb_{uvw}}^2}\\
&= \E{\paren*{\sum_{\substack{(v,w) \in N(u)^2 :\\ vw \in E\\ uv \bef uw
\bef vw}} \Xb_{uvw}}^2}\\
&= \sum_{\substack{(v,w) \in N(u)^2 :\\ vw \in E\\ uv \bef uw \bef vw}}
\E{\Xb_{uvw}^2} +  \sum_{\substack{(v,w,x,y) \in N(u)^4 :\\ vw, xy \in E\\ uv
\bef uw \bef vw\\ ux \bef uy \bef xy\\ \abs{\set{v,w} \cap \set{x,y}} = 1}}
\E{\Xb_{uvw}\Xb_{xyy}} + \sum_{\substack{(v,w,x,y) \in N(u)^4 :\\ vw, xy \in
E\\ uv \bef uw \bef vw\\ ux \bef uy \bef xy\\ \set{v,w} \cap \set{x,y} =
\emptyset}} \E{\Xb_{uvw}\Xb_{xyy}}
\end{align*}

We will bound each of these three terms in turn. First, as the probability that
we add $\tgk_{uvw}$ to $\Xb$ is $\frac{1}{\sqrt{km}} \times
\paren*{\sqrt{k/m}}^2$, and $0 \le \tgk_{uvw} \le 1$, 
\begin{align*}
\sum_{\substack{(v,w) \in N(u)^2 :\\ vw \in E\\ uv \bef uw \bef vw}}
\E{\Xb_{uvw}^2} &\le \sum_{\substack{(v,w) \in N(u)^2 :\\ vw \in E\\ uv \bef uw
\bef vw}} \E{\Xb_{uvw}}\\
&= \frac{\Tgk_u\sqrt{k}}{m^{3/2}}\text{.}
\stepcounter{equation}\tag{\theequation}\label{eq:samet}
\end{align*}
Next, each triangle shares an edge with at most $\Delta_E$ other triangles, and
$\Xb_{uvw}\Xb_{uxy} > 0$ only if $g(u) = 1$ and all of $uv,uw,ux,uy$ are kept
by $\clastimator$, which occurs with probability $\frac{1}{\sqrt{km}} \times
\paren*{\sqrt{k/m}}^3 = k/m^2$ when there are exactly three distinct vertices
among $v,w,x,y$. So again using the fact that $0 \le \tgk_{uvw} \le 1$,
\begin{align*}
\sum_{\substack{(v,w,x,y) \in N(u)^4 :\\ vw, xy \in E\\ uv
\bef uw \bef vw\\ ux \bef uy \bef xy\\ \abs{\set{v,w} \cap \set{x,y}} = 1}}
\E{\Xb_{uvw}\Xb_{xyy}} &\le \sum_{\substack{(v,w) \in N(u)^2 :\\ vw \in E\\ uv
\bef uw \bef vw}} \Delta_E \tgk_{uvw} \frac{k}{m^2}\\ 
&= \frac{\Tgk_u k}{m^2}\Delta_E\text{.}
\stepcounter{equation}\tag{\theequation}\label{eq:onedge}
\end{align*}
For the final term, we will need the fact that $\tgk_{uvw} \le \frac{d_v +
d_w}{k}$. If $k \le d_v + d_w$, this follows immediately from the fact that
$\tgk_{uvw} \le 1$. Otherwise, if $\tgk_{uvw} \not= 0$,
\begin{align*}
\tgk_{uvw} &= 1 - (1 - 1/k)^{\degb{uv} + \degb{uw}}\\
&\le 1 - (1 - 1/k)^{d_v + d_w}\\
&= -\sum_{i = 1}^{d_v + d_w}\binom{d_v + d_w}{i}(-1/k)^i\\
&= \frac{d_v + d_w}{k} - \sum_{i = 2}^{d_v + d_w}\binom{d_v + d_w}{i}(-1/k)^i\\
&\le \frac{d_v + d_w}{k}
\end{align*}
as the terms of $ \sum_{i = 2}^{d_v + d_w}\binom{d_v + d_w}{i}(-1/k)^i$
alternate between positive and negative, and their magnitude is decreasing in
$i$ (as $k > d_v + d_w$), and they start positive, so the sum is non-negative.

Therefore, as for disjoint $\set{v,w},\set{x,y}$, if $\tgk_{uvw} \tgk_{uxy} >
0$,  the probability that $\Xb_{uvw}\Xb_{uxy} \not=0$ is $\frac{1}{\sqrt{km}}
\times (\sqrt{k/m})^4 = \frac{k^{3/2}}{m^{5/2}}$,
\begin{align*}
\sum_{\substack{(v,w,x,y) \in N(u)^4 :\\ vw, xy \in E\\ uv \bef uw \bef vw\\ ux
\bef uy \bef xy\\ \set{v,w} \cap \set{x,y} = \emptyset}} \E{\Xb_{uvw}\Xb_{xyy}}
&= \sum_{\substack{(v,w,x,y) \in N(u)^4 :\\ vw, xy \in E\\ uv \bef uw \bef vw\\
ux \bef uy \bef xy\\ \set{v,w} \cap \set{x,y} = \emptyset}}
\frac{k^{3/2}}{m^{5/2}} \tgk_{uvw} \tgk_{uxy}\\
&\le \sum_{\substack{(v,w) \in N(u)^2 :\\ vw \in E\\ uv \bef uw \bef vw}}
\frac{k^{3/2}}{m^{5/2}} \sum_{\substack{(x,y) \in N(u)^2 :\\ xy \in E\\ ux \bef
uy \bef xy}} \frac{d_x + d_y}{k}
\end{align*}
and, as for each $x \in N(u)$ there are at most $\Delta_E$ elements $y$ of $N(u)$ such that $uxy$ is a triangle, we have
\begin{align*}
\sum_{\substack{(x,y) \in N(u)^2 :\\ xy \in E\\ ux \bef uy \bef xy}} \frac{d_x
+ d_y}{k} &= \sum_{x \in N(u)}\sum_{\substack{y \in N(u) :\\ xy \in E\\ ux \bef
xy\\ uy \bef xy\\ d_y \le d_x}} \frac{d_x + d_y}{k}\\
&\le \sum_{x \in N(u)}\frac{2d_x}{k}\Delta_E\\
&\le \frac{2m}{k}\Delta_E
\end{align*}
so
\begin{align*}
\sum_{\substack{(v,w,x,y) \in N(u)^4 :\\ vw, xy \in E\\ uv \bef uw \bef vw\\ ux
\bef uy \bef xy\\ \set{v,w} \cap \set{x,y} = \emptyset}} \E{\Xb_{uvw}\Xb_{xyy}}
&\le \sum_{\substack{(v,w) \in N(u)^2 :\\ vw \in E\\ uv \bef uw \bef vw}}
\frac{2\sqrt{k}}{m^{3/2}}\Delta_E\\
&= 2 \frac{\Tgk_u\sqrt{k}}{m^{3/2}}\Delta_E\text{.}
\stepcounter{equation}\tag{\theequation}\label{eq:disj}
\end{align*}
Therefore, by adding \eqref{eq:samet}, \eqref{eq:onedge}, and \eqref{eq:disj},
\begin{align*}
\Var{\Xb_u} &\le \frac{\Tgk_u\sqrt{k}}{m^{3/2}} + \frac{\Tgk_u k}{m^2}\Delta_E +
2\frac{\Tgk_u\sqrt{k}}{m^{3/2}}\Delta_E\\
&\le 4\frac{\Tgk_u\sqrt{k}}{m^{3/2}}\Delta_E
\end{align*}
as $\Delta_E \ge 1$ and $k \le m$. The result then follows from summing over
all $u \in V$.
\end{proof}

\begin{lemma}
\label{lm:clastimate}
For any $\varepsilon, \delta \in (0,1\rbrack$, there is a classical streaming
algorithm, using \[
O\paren*{\frac{m^{3/2}}{T \sqrt{k}}\Delta_E\log
n\frac{1}{\varepsilon^2}\log \frac{1}{\delta} }
\]
bits of space in expectation, that estimates $\Tgk$ to $\varepsilon T$
precision with probability $1-\delta$.
\end{lemma}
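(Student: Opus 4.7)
The plan is to mimic the proof of Lemma~\ref{lm:questimate}: rescale the output of $\clastimator(k)$ into an unbiased estimator of $\Tgk$, average enough copies to push the variance below $\varepsilon^2 T^2 / 4$, apply Chebyshev's inequality to obtain $3/4$ success probability, and then boost to $1-\delta$ via the median of $\Theta(\log(1/\delta))$ independent repeats.

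Concretely, I would define $Y = \frac{m^{3/2}}{\sqrt{k}} \Xb$. The two preceding lemmas then give $\E{Y} = \Tgk$ and $\Var{Y} \le \frac{4 \Tgk m^{3/2} \Delta_E}{\sqrt{k}} \le \frac{4 T m^{3/2} \Delta_E}{\sqrt{k}}$, using $\Tgk \le T$. Averaging $N = \Theta\paren*{\frac{m^{3/2} \Delta_E}{T \sqrt{k} \varepsilon^2}}$ independent copies of $Y$ yields an estimator with the same expectation and variance at most $\varepsilon^2 T^2 / 4$, so by Chebyshev this average is within $\varepsilon T$ of $\Tgk$ with probability at least $3/4$. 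Taking the median of $\Theta(\log(1/\delta))$ such averaged estimators then amplifies the success probability to $1-\delta$ via the standard Chernoff bound on the indicators of the ``within $\varepsilon T$'' event. By Lemma~\ref{lm:classtimatorspace}, each invocation of $\clastimator(k)$ uses $O(\log n)$ bits in expectation, and linearity of expectation across the $N \cdot \Theta(\log(1/\delta))$ independent copies then yields the claimed total expected space.

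There is no substantive obstacle; the argument is a routine Chebyshev-plus-median amplification that directly parallels the proof of Lemma~\ref{lm:questimate}. The one minor technicality worth flagging is that the per-copy space bound from Lemma~\ref{lm:classtimatorspace} is in expectation rather than worst case, but this causes no trouble because expectations add across independent copies, so the overall bound stated in the lemma is likewise an in-expectation bound (consistent with how it is phrased).
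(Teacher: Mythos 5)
Your proof is correct and follows essentially the same route as the paper: rescale $\Xb$ by $m^{3/2}/\sqrt{k}$, average $\Theta\paren*{\frac{m^{3/2}\Delta_E}{T\sqrt{k}\varepsilon^2}}$ copies to drive the variance to $\varepsilon^2 T^2/4$ via the bound $\Tgk \le T$, apply Chebyshev, and boost with the median of $O(\log(1/\delta))$ repeats, with the space bound coming from Lemma~\ref{lm:classtimatorspace} and linearity of expectation. If anything, your bookkeeping of the post-averaging variance ($\le \varepsilon^2 T^2/4$) is slightly more careful than the paper's, which momentarily writes $\varepsilon^2(\Tgk)^2/4$ even though the copy count is calibrated against $T$ rather than $\Tgk$.
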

\begin{proof}
As $\Tgk \le T$, we can take the average of
$\Theta\paren*{\frac{1}{\varepsilon^2}\frac{m^{3/2}}{T
\sqrt{k}}\Delta_E}$ copies of $\clastimator(k)$ and multiply by
$\frac{m^{3/2}}{\sqrt{k}}$ to obtain an estimator with expectation $\Tgk$ and
variance at most $\varepsilon^2 (\Tgk)^2/4 \le \varepsilon^2 T^2/4$. So by
Chebyshev's inequality the estimator will be within $\varepsilon T$ of $\Tgk$
with probability $3/4$. We can then repeat this $O(\log \frac{1}{\delta})$
times and take the median of our estimators to estimate $\Tgk$ to
$T\varepsilon$ precision with probability $1 - \delta$.

By Lemma~\ref{lm:classtimatorspace}, each copy of $\clastimator$ will require
$\bO{\log n}$ bits of space in expectation, and so the result follows.
\end{proof}

\section{Hybrid Quantum-Classical Algorithm}
By combining our quantum and classical estimators, we may now prove
Theorem~\ref{thm:main}.
\main*
\begin{proof}
Let \[
k = \frac{T^{2/5}}{m^{1/5}}\Delta_E^{2/5}\text{.}
\]
By Lemmas~\ref{lm:questimate} and~\ref{lm:clastimate}, there are algorithms for
estimating each of $\Tlk$, $\Tgk$ to precision $\varepsilon T/2$ with
probability $1-\delta/2$, using \[
O\paren*{ \frac{m^{8/5}}{T^{6/5}}\Delta_E^{4/5}\log n
\frac{1}{\varepsilon^2}\log \frac{1}{\delta}}
\]
quantum and classical bits in expectation. If we then sum these estimators they
will be within $\varepsilon T$ of $T$ with probability $1 - \delta$, by taking
a union bound.
\end{proof}

\section*{Acknowledgements}
The author would like to thank Scott Aaronson for suggesting the technique of
using ``dummy variables'' to construct a superposition in the stream.

The author was supported by the National Science Foundation (NSF) under Grant
Number CCF-1751040 (CAREER). Also supported by Laboratory Directed Research and
Development program at Sandia National Laboratories, a multimission laboratory
managed and operated by National Technology and Engineering Solutions of
Sandia, LLC., a wholly owned subsidiary of Honeywell International, Inc., for
the U.S. Department of Energy’s National Nuclear Security Administration under
contract DE-NA-0003525. Also supported by the U.S. Department of Energy, Office
of Science, Office of Advanced Scientific Computing Research, Accelerated
Research in Quantum Computing program.
\newpage
\bibliographystyle{alpha}
\bibliography{refs}

\end{document}